\def\etal{{et al.}}
\newtheorem{axiom}{Axiom}[section]
\newtheorem{lemma}[axiom]{Lemma}
\newtheorem{observation}[axiom]{Observation}
\newtheorem{theorem}[axiom]{Theorem}
\newenvironment{proof}{{\sl Proof.\ }\rm }{}
\def\qed{\hfill $\Box$ \bigbreak}
\newcommand{\seq}[1]{\langle #1\rangle}
\newcommand{\HB}{{\sc Honey-Bee}}
\newcommand{\HBS}{{\sc Honey-Bee-Solitaire}}
\newcommand{\HBtwo}{{\sc Honey-Bee-2-Players}}
\newcommand{\SCS}{{\sc SCS}}
\newcommand{\MSCS}{{\sc MSCS}}
\newcommand{\FVS}{{\sc FVS}}
\newcommand{\QBF}{{\sc QBF}}
\long\def\probl#1#2#3{\bigskip
\centerline{\framebox{
\begin{minipage}{0.92\textwidth}
\centerline{\bf Problem #1}
\smallskip
{\bf Input:} #2
\\[1ex]
{\bf Question:} #3
\end{minipage}}}\bigskip}
\begin{document} 

\title{{\bf An Algorithmic Analysis of the\\Honey-Bee Game}\thanks{RF acknowledges
support by
the National Natural Science Foundation of China (No. 60973026),
the Shanghai Leading Academic Discipline Project (project number B114),
the Shanghai Committee of Science and Technology of China (09DZ2272800),
and the Robert Bosch Foundation (Science Bridge China 32.5.8003.0040.0).
GJW acknowledges support by
the Netherlands Organisation for Scientific Research (NWO grant 639.033.403),
and by BSIK grant 03018 (BRICKS: Basic Research in Informatics for Creating
the Know\-ledge Society).}}

\author{\sc Rudolf Fleischer
      \thanks{ School of Computer Science, IIPL, Fudan University,
      Shanghai 200433, China. Email: {\tt rudolf@fudan.edu.cn}.}
      \and
      \sc Gerhard J.\ Woeginger
      \thanks{ {\tt gwoegi@win.tue.nl}.
      Department of Mathematics and Computer Science,
      TU Eindhoven, P.O.\ Box 513, 5600 MB Eindhoven, Netherlands.}
      }
\date{}
\maketitle

\date{}

\maketitle

%%%%%%%%%%%%%%%%%%%%%%%%%%%%%%%%%%%%%%%%%%%%%%%%%%%%%%%%%%%%%%%%%%%%%%%%%
%%%%%%%%%%%%%%%%%%%%%%%%%%%%%%%%%%%%%%%%%%%%%%%%%%%%%%%%%%%%%%%%%%%%%%%%%

\begin{abstract}
The {\HB} game is a two-player board game that is played on a connected 
hexagonal colored grid or (in a generalized setting) on a connected graph 
with colored nodes.
In a single move, a player calls a color and thereby conquers all the nodes 
of that color that are adjacent to his own current territory.
Both players want to conquer the majority of the nodes.
We show that winning the game is PSPACE-hard in general, NP-hard on 
series-parallel graphs, but easy on outerplanar graphs.

In the solitaire version, the goal of the single player is to conquer the 
entire graph with the minimum number of moves. 
The solitaire version is NP-hard on trees and split graphs, but can be solved 
in polynomial time on co-comparability graphs.

\bigskip
\noindent
\emph{Keywords:} combinatorial game; computational complexity; graph problem.
\end{abstract}

%%%%%%%%%%%%%%%%%%%%%%%%%%%%%%%%%%%%%%%%%%%%%%%%%%%%%%%%%%%%%%%%%%%%%%%%%
%%%%%%%%%%%%%%%%%%%%%%%%%%%%%%%%%%%%%%%%%%%%%%%%%%%%%%%%%%%%%%%%%%%%%%%%%
\bigskip
\section{Introduction}
\label{s_intro}
%%%%%%%%%%%%%%%%%%%%%%%%%%%%%%%%%%%%%%%%%%%%%%%%%%%%%%%%%%%%%%%%%%%%%%%%%
The {\HB} game is a popular two-player board game that shows up in many 
different variants and at many different places on the web (the game is 
best be played on a computer). 
For a playable version we refer the reader for instance to Axel Born's 
web-page \cite{Bor09}; see Fig.~\ref{fig_born} for a screenshot.
The playing field in {\HB} is a grid of hexagonal honey-comb cells that 
come in various colors; the coloring changes from game to game.
The playing field may be arbitrarily shaped and may contain holes,
but must always be connected.
In the beginning of the game, each player controls a single cell in some
corner of the playing field.
Usually, the playing area is symmetric and the two players face each other 
from symmetrically opposing starting cells.
In every move a player may call a color $c$, and thereby gains control over 
all connected regions of color $c$ that have a common border with the area
already under his control.
The only restriction on $c$ is that it cannot be one of the two colors used 
by the two players in their last move before the current move, respectively.
A player wins when he controls the majority of all cells.
On Born's web-page \cite{Bor09} one can play against a computer,
choosing from four different layouts for the playing field.
The computer uses a simple greedy strategy: ``Always call the color $c$ that 
maximizes the immediate gain.''
This strategy is short-sighted and not very strong, and an alert human player 
usually beats the computer after a few practice matches.

\begin{figure}
\setlength{\unitlength}{1000truesp}
\begin{center}
\begin{picture}(28000,18000)%
\epsfig{file=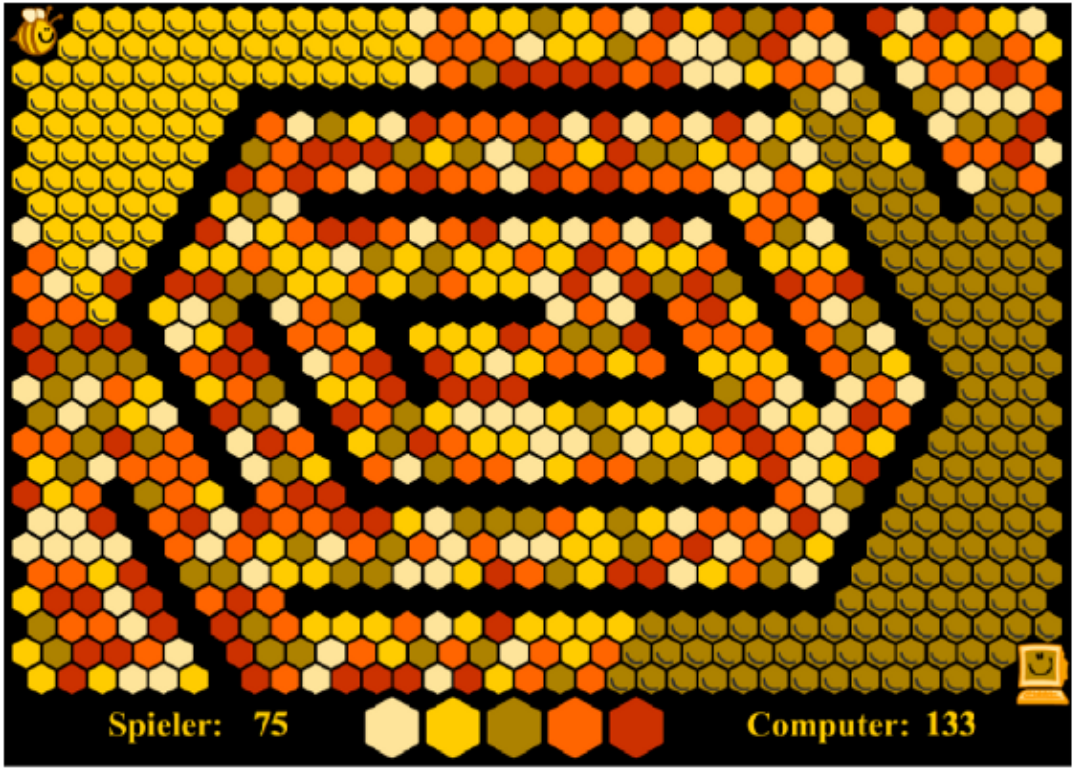}%
\end{picture}%
\caption{Born's ``Biene''. The human player (starting from the top-left corner)
is on the edge of losing against the computer (starting from the bottom-right corner).}
\label{fig_born}
\end{center}
\end{figure}

In this paper we perform a complexity study of the {\HB} game when played by two players 
on some arbitrary connected graph instead of the hex-grid of the original game.
We will show in Section~\ref{s_two} that {\HBtwo} is NP-hard even on series-parallel 
graphs, and that it is PSPACE-complete in general.
On outerplanar graphs, however, it is quite easy to compute a winning strategy.

In the \emph{solitaire} (single-player) version of {\HB} the goal is to conquer
the entire playing field as quickly as possible.
Intuitively, a good strategy for the solitaire game will be close to a
strong heuristic for the two-player game.
For the solitaire version, our results draw a sharp separation line between 
easy and difficult cases.
In particular, we show in Section~\ref{s_one} that {\HBS} is NP-hard for 
split graphs and for trees, but polynomial-time solvable on co-comparability 
graphs (which include interval graphs and permutation graphs).
Thus, the complexity of the game is well-characterized for the class and subclasses 
of perfect graphs; see Fig.~\ref{fig_results} for a summary of our results.

%%%%%%%%%%%%%%%%%%%%%%%%%%%%%%%%%%%%%%%%%%%%%%%%%%%%%%%%%%%%%%%%%%%%%%%%%
\begin{figure}
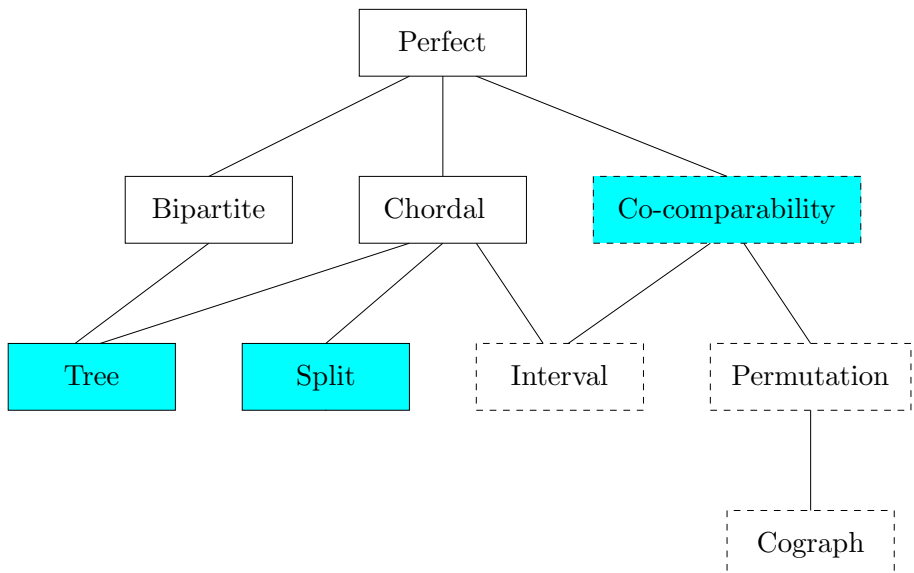

\centering
\input results.tex
\caption{Summary of the complexity results for \HBS.
NP-complete problems have a solid frame, polynomial-time solvable problems
have a dashed frame.
The results for the graph classes in the three colored boxes imply all other results.}
\label{fig_results}
\end{figure}
%%%%%%%%%%%%%%%%%%%%%%%%%%%%%%%%%%%%%%%%%%%%%%%%%%%%%%%%%%%%%%%%%%%%%%%%%

%%%%%%%%%%%%%%%%%%%%%%%%%%%%%%%%%%%%%%%%%%%%%%%%%%%%%%%%%%%%%%%%%%%%%%%%%
\section{Definitions}
\label{s_def}
%%%%%%%%%%%%%%%%%%%%%%%%%%%%%%%%%%%%%%%%%%%%%%%%%%%%%%%%%%%%%%%%%%%%%%%%%
We model {\HB} in the following graph-theoretic setting.
The playing field is a connected, simple, loopless, undirected graph $G=(V,E)$.
There is a set $C$ of $k$ colors, and every node $v\in V$ is colored by some color 
$col(v)\in C$; we stress that this coloring does not need to be proper, that is, 
there may be edges $[u,v]\in E$ with $col(u)=col(v)$.
For a color $c\in C$, the subset $V_c\subseteq V$ contains the nodes of color $c$.
For a node $v\in V$ and a color $c\in C$, we define the \emph{color-$c$-neighborhood} 
$\Gamma(v,c)$ as the set of nodes
in $V_c$ either adjacent to $v$ or connected to $v$ by a path
of nodes of color $c$.
Similarly, we denote by $\Gamma(W,c)=\bigcup_{w\in W}\Gamma(w,c)$
the color-$c$-neighborhood of a subset $W\subseteq V$.
For a subset $W\subseteq V$ and a sequence $\gamma=\seq{\gamma_1,\ldots,\gamma_b}$ of 
colors in $C$, we define a corresponding sequence of node sets $W_1=W$ and 
$W_{i+1}=W_i\cup \Gamma(W_i,\gamma_i)$, for $1\le i\le b$.
We say that sequence $\gamma$ started on $W$ \emph{conquers} the final node set 
$W_{b+1}$ in $b$ moves, and we denote this situation by $W\to_{\gamma}W_{b+1}$.
The nodes in $V-W_{b+1}$ are called \emph{free} nodes.

In the \emph{solitaire} version of \HB, the goal is to conquer the entire playing 
field with the smallest possible number of moves.
Note that {\HBS} is trivial in the case of only two colors.
But as we will see in Section~\ref{s_one},
the case of three colors can already be difficult.

\probl{\HBS}
{A graph $G=(V,E)$; 
a set $C$ of $k$ colors and a coloring $col:V\to C$;
a start node $v_0\in V$;
and a bound $b$.}
{Does there exist a color sequence $\gamma=\seq{\gamma_1,\ldots,\gamma_b}$
of length $b$ such that $\{v_0\}\to_{\gamma}V$?}

In the \emph{two-player} version of {\HB}, the two players $A$ and $B$ start from 
two distinct nodes $a_0$ and $b_0$ and then extend their regions step by step by 
alternately calling colors.
Player $A$ makes the first move.
One round of the game consists of a move of $A$ followed by a move of $B$.
Consider a round, where at the beginning the two players control node sets $W_A$ 
and $W_B$, respectively.
If player $A$ calls color $c$, then he extends his region $W_A$ to 
$W^\prime_A=W_A\cup(\Gamma(W_A,c)-W_B)$.
If afterwards player $B$ calls color $d$, then he extends his region $W_B$ to
$W^\prime_B=W_B\cup(\Gamma(W_B,c)-W^\prime_A)$.
Note that once a player controls a node, he can never lose it again.

The game terminates as soon as one player controls more than half of all nodes.
This player wins the game.
To avoid draws, we require that the number of nodes is odd.
There are three important rules that constrain the colors that a player is 
allowed to call.

%%%%%%%%%%%%%%%%%
\begin{enumerate}
\item[R1.]
A player must never call the color that has just been called by the other player.
\item[R2.]
A player must never call the color that he has called in his previous move.
\item[R3.]
A player must always call a color that strictly enlarges his territory, unless 
rules R1 and R2 prevent him from doing so.
\end{enumerate} 
%%%%%%%%%%%%%%%%%

%%%%%%%%%%%%%%%%%%%%%%%%%%%%%%%%%%%%%%%%%%%%%%%%%%%%%%%%%%%%%%%%%%%%%%%%%

\begin{figure}
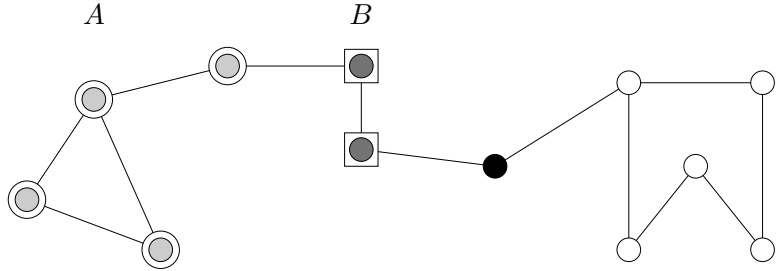

\centering
\input r2.tex
\caption{Player A (circled nodes) is leading with four captured nodes over player B
(squared nodes) with only two captured nodes.
Player B would next like to play black to capture all the white nodes
in the next move.
Without rule~R2, player A could prevent this by repeatedly
playing black.
}
\label{fig_rule_R2}
\end{figure}

%%%%%%%%%%%%%%%%%%%%%%%%%%%%%%%%%%%%%%%%%%%%%%%%%%%%%%%%%%%%%%%%%%%%%%%%%

What is the motivation for these three rules?
Rule~R1 is a technical condition that arises from the graphical implementation~\cite{Bor09}
of the game:
Whenever a player calls a color $c$, his current territory is entirely recolored to 
color $c$.
This makes it visually easier to recognize the territories controlled by both players.
Rule~R2 prevents the players from permanently blocking some color for the opponent.
Fig.~\ref{fig_rule_R2} shows a situation where rule~R2 actually prevents the game 
from stalling.
Rule~R3 is quite delicate, and is justified by situations as depicted in 
Fig.~\ref{fig_rule_R3}.
Rule~R3 guarantees that every game must terminate with either a win for player~A 
or a win for player~B.
Note that rule~R2 is redundant except in the case when
a player has no move to gain territory (see Fig.~\ref{fig_rule_R2}.

%%%%%%%%%%%%%%%%%%%%%%%%%%%%%%%%%%%%%%%%%%%%%%%%%%%%%%%%%%%%%%%%%%%%%%%%%

\begin{figure}
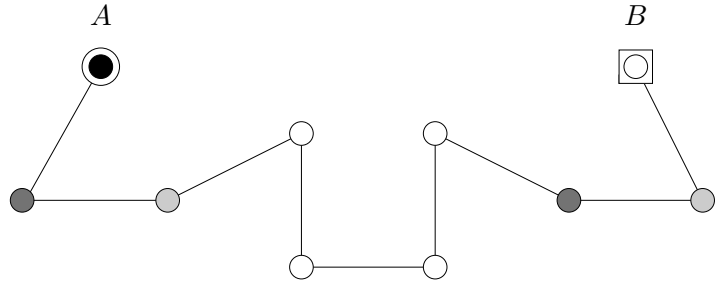

\centering
\input r3.tex
\caption{
Player A who controls the black node at the left end of the path
loses if he calls dark-gray (and hence prefers to call white, light-gray, and black).
Player B who controls the white node at the other end of the path
loses if he calls light-gray (and hence prefers to call colors white, dark-gray, and black).
Rule~R3 forces the players to move into the unoccupied territory.
}
\label{fig_rule_R3}
\end{figure}

%%%%%%%%%%%%%%%%%%%%%%%%%%%%%%%%%%%%%%%%%%%%%%%%%%%%%%%%%%%%%%%%%%%%%%%%%

\probl{\HBtwo}
{A graph $G=(V,E)$ with an odd number of nodes;
a set $C$ of colors and a coloring $col:V\to C$;
two start nodes $a_0,b_0\in V$.}
{Can player $A$ enforce a win when the game is played according to the
above rules?}

Note that {\HBtwo} is trivial in the case of only three colors:
The players do not have the slightest freedom in choosing their next color, and 
always must call the unique color allowed by rules~R1 and~R2.
However we will see in Section~\ref{s_two} that the case of four colors can 
already be difficult.

Finally we observe that calling a color $c$ always conquers all connected components 
induced by $V_c$ that are adjacent to the current territory.
Hence an equivalent definition of the game could use a graph with node weights 
(that specify the size of the corresponding connected component) and a \emph{proper} 
coloring of the nodes.
Any instance under the original definition can be transformed into an equivalent 
instance under the new definition by contracting each connected component of $V_c$, 
for some $c$, into a single node of weight $|V_c|$.
However, we are interested in restrictions of the game to particular graph classes,
some of which are not closed under edge contractions (as for instance the hex-grid 
graph of the original {\HB} game).

%%%%%%%%%%%%%%%%%%%%%%%%%%%%%%%%%%%%%%%%%%%%%%%%%%%%%%%%%%%%%%%%%%%%%%%%%
%%%%%%%%%%%%%%%%%%%%%%%%%%%%%%%%%%%%%%%%%%%%%%%%%%%%%%%%%%%%%%%%%%%%%%%%%
\section{The Solitaire Game}
\label{s_one}
%%%%%%%%%%%%%%%%%%%%%%%%%%%%%%%%%%%%%%%%%%%%%%%%%%%%%%%%%%%%%%%%%%%%%%%%%
In this section we study the complexity of finding optimally short color 
sequences for {\HBS}.
We will show that this is easy for co-comparability graphs, while it is 
NP-hard for trees and split graphs.
Since the family of co-comparability graphs contains interval graphs, 
permutation graphs, and co-graphs as sub-families, our positive result
for co-comparability graphs implies all other positive results in
Fig.~\ref{fig_results}.

A first straightforward observation is that {\HBS} lies in NP:
Any connected graph $G=(V,E)$ can be conquered in at most $|V|$ moves,
and hence such a sequence of polynomially many moves can serve as an
NP-certificate.

%%%%%%%%%%%%%%%%%%%%%%%%%%%%%%%%%%%%%%%%%%%%%%%%%%%%%%%%%%%%%%%%%%%%%%%%%
\subsection{The Solitaire Game on Co-Comparability Graphs}
\label{sec1:cocomparability}
%%%%%%%%%%%%%%%%%%%%%%%%%%%%%%%%%%%%%%%%%%%%%%%%%%%%%%%%%%%%%%%%%%%%%%%%%
A \emph{co-comparability graph} $G=(V,E)$ is an undirected graph
whose nodes $V$correspond to the elements of some partial order $<$
and whose edges $E$ connect any two elements that are incomparable in that partial order,
i.e., $[u,v]\in E$ if neither $u<v$ nor $v<u$ holds.
For simplicity, we identify the nodes with the elements of the partial order.
Golumbic {\etal}~\cite{GoRoUr83} showed that co-comparability
graphs are exactly the intersection graphs of
continuous real-valued functions over some interval $I$.
If two function curves intersect,
the corresponding elements are incomparable in the partial order;
otherwise, the curve that lies complete above the other one
corresponds to the larger element in the partial order.
The function graph representation readily implies
that the class of co-comparability graphs is closed under edge contractions.
Therefore, we may w.l.o.g.~restrict our analysis of {\HBS} to
co-comparability graphs with a proper node coloring,
i.e., adjacent nodes have distinct colors
(in the solitaire game
we do not care about the weight of a node after an edge contraction).
In this case, every color class is totally ordered
because incomparable node pairs have been contracted.

Consider an instance of {\HBS} with 
a minimal start node $v_0$ (in the partial order on $V$);
a maximal start node could be handled similarly.
The function graph representation implies the following observation.

\begin{observation}
\label{obs_smaller}
Conquering a node will simultaneously conquer all smaller nodes of the same color.
\qed
\end{observation}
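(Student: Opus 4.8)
The plan is to work entirely in the function-diagram representation of Golumbic~\etal, where each node $v$ is a continuous curve $f_v$ over the interval $I$, two nodes are adjacent exactly when their curves cross, and $u<v$ holds exactly when $f_u$ lies strictly below $f_v$ on all of $I$. After the edge contractions described above, the coloring is proper, so every color class is monochromatic-edge-free; consequently, for a territory $W$ the conquered set $\Gamma(W,c)$ is simply the set of color-$c$ nodes whose curve crosses the curve of some node of $W$. I would fix the move that conquers a node $v$ of color $c$, let $W$ be the territory just before that move, and let $u<v$ be any smaller node of color $c$. The goal is then to show $u\in\Gamma(W,c)$, i.e.\ that $f_u$ also crosses some curve of $W$, so that $u$ is swept up in the very same move.

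First I would record two global facts that the construction supplies for free: the territory $W$ always contains the start node $v_0$ and is connected (it is grown from $\{v_0\}$ by repeatedly adding neighbours), and $v_0$ is minimal, so no node lies below $v_0$. Now pick a node $w\in W$ whose curve crosses $f_v$. Comparing $u$ with $w$, I would first rule out $w<u$: at a crossing point $p$ of $f_v$ and $f_w$ we have $f_w(p)=f_v(p)>f_u(p)$ because $u<v$, which is incompatible with $f_w$ lying entirely below $f_u$. Hence either $u$ and $w$ are incomparable, in which case $f_u$ crosses $f_w$ and we are done immediately, or $u<w$.

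The remaining case $u<w$ is where the real work lies, and I expect it to be the main obstacle, since neither $w$ nor any single already-identified node of $W$ need be adjacent to $u$, so the witness must be produced from the global shape of the territory. Because $v_0$ is minimal, $u\not<v_0$, so either $u$ is incomparable to $v_0$ (again an immediate crossing with $v_0\in W$, done) or $v_0<u$. In the latter subcase $f_{v_0}$ lies below $f_u$ while $f_w$ lies above $f_u$, and $v_0,w\in W$ are joined by a path $v_0=p_0,p_1,\ldots,p_\ell=w$ inside the connected set $W$. Walking along this path I would track the position of $f_{p_t}$ relative to $f_u$: it is entirely below at the start and entirely above at the end, and whenever two consecutive nodes were strictly on opposite sides of $f_u$ their curves would be nested and hence could not cross, contradicting the adjacency of consecutive path nodes. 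Therefore the side cannot flip directly, so some intermediate curve $f_{p_t}$ must actually cross $f_u$; that $p_t\in W$ is the desired neighbour, yielding $u\in\Gamma(W,c)$.

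Finally I would note why both hypotheses are genuinely used: minimality of $v_0$ guarantees a node of $W$ lying below $u$ to anchor the low end of the path, and connectivity of $W$ supplies the path whose curves sweep continuously from below $f_u$ to above $f_u$. This crossing (intermediate-value) argument along the path is the heart of the proof; once it is in place, the observation and its symmetric counterpart for a maximal start node follow at once.
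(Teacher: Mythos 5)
Your proof is correct and follows exactly the route the paper intends: the paper offers no written argument beyond the remark that ``the function graph representation implies the following observation,'' and your curve-crossing argument (ruling out $w<u$ at an intersection point of $f_w$ and $f_v$, handling incomparability directly, and in the case $v_0<u<w$ using the minimality of $v_0$ plus the connectivity of the conquered territory to find, along a path from $v_0$ to $w$, an intermediate curve that must cross $f_u$) is precisely the elaboration of that remark. In particular you correctly identify that both the minimality of $v_0$ and the connectedness of $W$ are genuinely needed, which is where the substance of the observation lies.
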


For any color $c$, let $Max(c)$ denote the largest node of color $c$.
By Obs.~\ref{obs_smaller}, it suffices to find the shortest 
color sequence conquering all nodes $Max(c)$, for all colors $c$.
We can do that by a simple shortest path computation.
We assign every node $Max(c)$ weight $0$, and all other nodes weight $1$.
Then we compute a shortest path (with respect to the node-weights)
from $v_0$ to every node $Max(c)$ that is a \emph{maximal element}
in the partial order
(which is actually exactly the set of all maximal elements).
Let $OPT$ denote the smallest cost over all such paths.

For a color sequence $\gamma=\seq{\gamma_1,\ldots,\gamma_b}$,
we define the \emph{length} of $\gamma$ as $|\gamma|=b$.
We also define the \emph{essential length} $ess(\gamma)$ of $\gamma$ as
$|\gamma|$ minus the number of steps where $\gamma$
conquers a maximal node $Max(c)$ of some color class $c$.
Obviously, $|\gamma|=ess(\gamma)+k$.
Note that $OPT$ is the minimal essential cost of any
color sequence conquering one of the maximal nodes.

\begin{lemma}
\label{thm_opt}
The optimal solution for {\HBS} has cost $OPT+k$.
\end{lemma}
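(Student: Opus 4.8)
The plan is to establish the two inequalities $|\gamma| \ge OPT + k$ and $|\gamma| \le OPT + k$ for a minimum-length conquering sequence, thereby pinning down the optimal cost exactly. Recall from the setup that $|\gamma| = ess(\gamma) + k$ for any sequence, since a full conquering sequence must at some step conquer each of the $k$ maximal nodes $Max(c)$, and $ess(\gamma)$ subtracts exactly those $k$ steps. So the claim $|\gamma| = OPT + k$ is equivalent to showing that the minimum essential length of a \emph{conquering} sequence equals $OPT$, the minimum essential length of a sequence conquering merely \emph{one} maximal node. This reframing is the conceptual heart of the argument.

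\medskip

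For the lower bound $ess(\gamma) \ge OPT$, I would argue that any sequence conquering all of $V$ in particular conquers some maximal element of the partial order, so its essential length is at least the minimum essential cost over all sequences reaching a single maximal node; but that minimum is precisely $OPT$ by definition. Concretely, the essential length counts exactly the non-$Max$ steps, which correspond to the weight-$1$ nodes newly conquered, matching the node-weighted shortest-path cost used to define $OPT$. The subtle point here is confirming that the essential length of \emph{any} conquering sequence is bounded below by a shortest \emph{path} cost — one must check that the sequence of conquered sets traces out a path-like structure in $G$ whose node-weight total is at least $OPT$, using Observation~\ref{obs_smaller} to ensure that conquering a node's color brings in all smaller same-color nodes "for free."

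\medskip

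For the upper bound, I would take a node-weighted shortest path from $v_0$ to a maximal element achieving cost $OPT$, and convert it into an explicit color sequence. Walking along this path and calling the color of each successive node yields a sequence whose essential length is exactly $OPT$ (the $Max(c)$ nodes on the path contribute $0$, the others contribute $1$). The remaining task is to verify that this sequence, possibly extended, conquers \emph{all} of $V$ and not just the single target maximal node. Here is where I expect the main obstacle: one must argue that conquering a single maximal element via the chosen path, combined with the $k$ "free" $Max(c)$-conquering steps that can be interleaved at no essential cost, actually sweeps up every node of the graph. The key insight should be that once the path reaches a maximal element, every color $c$ becomes callable in a way that conquers $Max(c)$ — and by Observation~\ref{obs_smaller} this drags in the entire color class $c$, covering all of $V$ across the $k$ colors. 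Making this interleaving rigorous, and ensuring the bookkeeping between "path steps" and "max-conquering steps" produces exactly $OPT + k$ moves with no double-counting, is the delicate part of the proof.
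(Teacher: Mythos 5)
Your proposal follows essentially the same route as the paper: both rest on the identity $|\gamma|=ess(\gamma)+k$ together with the observation that a sequence conquering all of $V$ must in particular reach a maximal element, whence $ess(\gamma)\ge OPT$. The paper's own proof records only this lower bound (leaving the matching upper-bound construction implicit in the preceding shortest-path discussion), so your extra attention to converting the weight-$OPT$ path into a color sequence and interleaving the $k$ free $Max(c)$-steps is consistent with, and somewhat more explicit than, what the paper does.
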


\begin{proof}
Let $\gamma$ be a shortest color sequence conquering the entire graph
starting at $v_0$.
After conquering $v$, $\gamma$ only needs to conquer all free nodes $Max(c)$
to conquer the entire graph.
Thus, $|\gamma| = ess(\gamma) + k \ge OPT+k$.
\qed
\end{proof}

\begin{theorem}
\label{th_cocomp}
{\HBS} starting at an extremal node $v_0$
can be solved in polynomial time on co-comparability graphs.
\end{theorem}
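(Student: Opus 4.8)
The plan is to show that the shortest-path computation sketched before Lemma~\ref{thm_opt} actually yields an optimal conquering sequence, not merely a lower bound.  Lemma~\ref{thm_opt} already proves that every conquering sequence has cost at least $OPT+k$, so it remains to exhibit one sequence achieving exactly $OPT+k$.  First I would fix a path $P$ from $v_0$ to some maximal node $Max(c^*)$ whose node-weight cost equals $OPT$; by the definition of the weights, $P$ passes through exactly $OPT$ nodes that are \emph{not} of the form $Max(c)$, plus some number of the weight-$0$ nodes $Max(c)$.  The idea is to read off the colors along $P$ in order and let the player call them, interspersing the remaining colors whenever convenient.

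Concretely, I would process the nodes of $P$ from $v_0$ outward.  Whenever the path advances to a new node, its color is called; by Observation~\ref{obs_smaller} this call conquers that node together with all smaller nodes of the same color, so in particular the entire color class whose maximal element lies on $P$ gets conquered the moment $P$ reaches it.  This uses at most $ess=OPT$ ``essential'' moves (the weight-$1$ steps) together with the free moves that land directly on some $Max(c)$.  After the path has been fully traversed, the player has conquered $Max(c^*)$, which is a maximal element; the remaining task is exactly to conquer the still-free maximal nodes $Max(c)$ of the other colors.  Since there are $k$ colors in total and conquering a maximal node of color $c$ costs one move but never an essential move (it is a weight-$0$ step), I would simply call each not-yet-conquered color once.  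By Observation~\ref{obs_smaller}, calling color $c$ once the territory is large enough conquers $Max(c)$ and hence the whole color class.

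The main obstacle I anticipate is the bookkeeping that guarantees the total move count is exactly $ess(\gamma)+k = OPT+k$ and not more: one must argue that no color is called twice and that every color-$c$-neighborhood invoked along the way is genuinely adjacent to the current territory at the moment it is called.  Adjacency is handled by the fact that $P$ is a path in $G$, so each successive node is adjacent to the already-conquered portion; and totality of each color class (guaranteed by the contraction to a proper coloring, as noted before Observation~\ref{obs_smaller}) ensures that once any node of color $c$ is reached, conquering $Max(c)$ requires at most the single additional call for that color.  Counting the $OPT$ essential moves plus exactly $k$ colour-completing moves then matches the lower bound of Lemma~\ref{thm_opt}, so the constructed sequence is optimal and, since the shortest path is computable in polynomial time, the whole procedure runs in polynomial time.
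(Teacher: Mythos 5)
Your proposal takes the same route as the paper: the algorithm is exactly the $0/1$ node-weighted shortest-path computation set up before Lemma~\ref{thm_opt}, and the lower bound $|\gamma|\ge OPT+k$ comes from that lemma. In fact you do strictly more than the paper's own two-line proof (which only remarks that computing the partial order, assigning weights, and running a shortest-path algorithm are all polynomial), because you also construct a conquering sequence attaining $OPT+k$; that upper bound is needed for correctness and the paper leaves it implicit in the statement of Lemma~\ref{thm_opt}.

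One justification in your construction is off, although the claim it supports is true. You argue that after the path has been traversed, each remaining $Max(c)$ falls to a single call of color $c$ because the color class is totally ordered ``once any node of color $c$ is reached.'' Totality of the color class actually works against you here: any already-conquered node of color $c$ is \emph{comparable} to $Max(c)$ and hence \emph{not} adjacent to it, so it cannot be the node through which $Max(c)$ is captured. The correct reason is that once your connected territory contains both the minimal node $v_0$ and the maximal node $Max(c^*)$ at the end of $P$, every free node is adjacent to the territory: in the function-curve representation of Golumbic et al., the territory is a chain of pairwise-crossing curves reaching from a curve lying nowhere above $u$'s (that of $v_0$) to one lying nowhere below $u$'s (that of the maximal endpoint), so some curve in the chain must cross the curve of any free node $u$ --- otherwise two consecutive, crossing curves of the chain would lie strictly below and strictly above $u$'s curve, which is impossible. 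With that fix, each remaining color is indeed finished in one call, the count $OPT+k$ matches the lower bound, and the rest of your argument (and the polynomial running time) goes through.
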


\begin{proof}
Given the co-comparability graph $G$, we can compute the underlying
partial order $<$ in polynomial time \cite{GoRoUr83}.
Assigning the weights and solving one single source shortest path problem
starting at $v_0$ also takes polynomial time.
\qed
\end{proof}

We can also formulate this algorithm as a dynamic program.
For any node $v$, let $D(v)$ denote the essential length of the shortest
color sequence $\gamma$ that can conquer $v$ when starting at $v_0$.
For any color $c$, let $min_v(c)$ denote the smallest node of color $c$
connected to $v$, if such nodes exist.
Then we can compute $D(v)$ recursively as follows:
$$D(v_0) = 0$$
and
$$D(v) = \min_{c} (D(min_v(c)) + \delta_v) \>,$$
where
$D(min_v(c))=\infty$ if $min_v(c)$ is undefined, and
$\delta_v=0$ ($1$) if $v$ is (not) a maximal node for some color class.

Clearly, this dynamic program simulates the shortest path computation
of our first algorithm and we have $OPT = \min_{v}(D(v)+k)$,
where we minimize over all maximal nodes $v$. 
We now extend the dynamic program to the case that $v_0$ is not an extremal element.
The problem is that we now must extend our territory in two directions.
If we choose a move that makes good progress upwards it may make little progress downwards,
or vice versa.
In particular, the optimal strategy cannot be decomposed into two independent
optimal strategies, one conquering upwards and one conquering downwards.
Analogously to the algorithm above, for a clor $c$ define
$Min(c)$ as the smallest node of color $c$,
and $max_v(c)$ as the largest node of color $c$ connected to a node $v$.

Unfortunately, we must now redefine the essential length of a color sequence $\gamma$.
In our original definition, we did not count coloring steps
that conquered maximal elements of some color class.
This is intuitively justified by the fact that these steps must be
done by any color sequence conquering the entire graph at some time,
therefore it is advantageous to do them as early as possible
(which is guaranteed by giving these moves cost 0).
But now we must also consider the minimal nodes of each color class.
An optimal sequence conquering the entire graph will at some time have conquered
a minimal node and a maximal node.
Afterwards, it will only call extremal nodes for some color class.
If both extremal nodes of a color class are still free,
we only need \emph{one} move to conquer both simultaneously.
If one of them had been captured earlier, we still need to conquer the other one.
This indicates that we should charge 1 for the first extremal node conquered
while the second one should be charged 0, as before. 
If both nodes are conquered in the same move, we should also charge 0.
Therefore, we now define the \emph{essential length} $ess(\gamma)$ of $\gamma$ as
$|\gamma|$ minus the number of steps where $\gamma$
conquers the second extremal node of some color class.

For a node $v$ below $v_0$ or incomparable to $v_0$
and a node $w$ above $v_0$ or incomparable to $v_0$ let
$D(v,w)$ denote the essential length of the shortest
color sequence $\gamma$ that can conquer $v$ and $w$ when starting at $v_0$.
Note that we do not need to keep track of which first extremal nodes
of a color class have been conquered because we can deduce this from the
two nodes $v$ and $w$ currently under consideration.
In particular, we can compute $D(v,w)$ recursively as follows:
$$D(v_0,v_0) = 0$$ 
and 
$$D(v,w) = \min_{c} (D(v,min_w(c)) + \delta_w(v),
D(max_v(c),w) + \delta_v(w)) \>,$$
where $\delta_v(w)=0$ if and only if $w$ is an extremal node of some
color class $c$ and the other extremal node of color class $c$
is either between $v$ and $w$, or incomparable to either $v$ or $w$, or both
(it was either conquered earlier, or it will be conquered in this step);
otherwise, $\delta_v(w)=1$.
Obviously, $|\gamma|=ess(\gamma)+k$.

\begin{lemma}
\label{thm_opt_general}
The optimal solution for {\HBS} has cost $\min_{v,w}(D(v,w)+k)$,
where we minimize over all minimal nodes $v$ and all maximal nodes $w$.
\end{lemma}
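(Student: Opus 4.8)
The plan is to mirror the structure of the earlier one-directional argument (Lemma~\ref{thm_opt}), but now accounting for the two-sided nature of the conquest. The key conceptual fact, already articulated in the text preceding the statement, is that once a color sequence $\gamma$ conquering the entire graph has captured some minimal node $v$ and some maximal node $w$, every remaining step serves only to capture extremal nodes of color classes. This is where Observation~\ref{obs_smaller} (and its order-dual for maximal nodes) does the heavy lifting: conquering a node captures all smaller nodes of the same color, and symmetrically conquering downward captures all larger nodes, so once both an extremal minimum and an extremal maximum are in our territory, the intermediate nodes of every color are automatically conquered, and all that remains is to reach the two extremal nodes $Min(c)$ and $Max(c)$ of each color class.

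First I would fix an optimal color sequence $\gamma$ with $\{v_0\}\to_\gamma V$, and identify the two nodes $v,w$ that are the \emph{last} minimal node and the \emph{last} maximal node captured by $\gamma$, so that after this point $\gamma$ only spends moves on extremal nodes. Restricting $\gamma$ to the prefix that conquers both $v$ and $w$ gives a witness for $D(v,w)$, so that $ess(\gamma)\ge D(v,w)$ for this particular pair. Combined with the identity $|\gamma|=ess(\gamma)+k$ (which holds because the essential-length definition discounts exactly the $k$ steps that conquer the second extremal node of each color class, and a full conquest performs exactly one such discounted step per color), I obtain $|\gamma|\ge D(v,w)+k\ge\min_{v,w}(D(v,w)+k)$.

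For the reverse inequality I would argue that the recursion for $D(v,w)$ is realizable: any pair $(v,w)$ achieving the minimum yields a concrete color sequence of essential length $D(v,w)$ reaching both $v$ and $w$, which by the extremality argument above can then be extended to conquer all of $V$ using only moves that each capture the as-yet-unconquered extremal node of some color class. The careful bookkeeping built into the definition of $\delta_v(w)$ guarantees that a move is charged $0$ precisely when it captures the second extremal node of a color class (or both simultaneously), so the total length of the resulting full sequence is exactly $ess(\gamma)+k=D(v,w)+k$, establishing optimality.

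The main obstacle I anticipate is the correctness of the charging scheme encoded in $\delta_v(w)$, in particular verifying that the condition for $\delta_v(w)=0$ --- namely that the other extremal node of color class $c$ lies between $v$ and $w$, or is incomparable to one of them --- correctly captures ``this extremal node is the second one conquered.'' One must check that along any path of the dynamic program, the state $(v,w)$ genuinely determines which first extremal nodes have already been captured (as the paragraph preceding the lemma asserts but does not fully justify), so that no double-counting or under-counting of the $k$ discounted moves occurs. The subtle case is when both extremal nodes of a color class are captured in a single move, and when an extremal node is incomparable to both $v$ and $w$; I would verify that the $\delta_v(w)=0$ condition handles both correctly so that the accounting $|\gamma|=ess(\gamma)+k$ remains exact.
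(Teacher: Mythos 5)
Your argument follows essentially the same route as the paper's: fix an optimal $\gamma$, pick a minimal node $v$ and a maximal node $w$ it conquers, bound $ess(\gamma)\ge D(v,w)$ via the prefix reaching them, and conclude with $|\gamma|=ess(\gamma)+k$ (the paper takes the \emph{first} minimal/maximal nodes conquered rather than your \emph{last}, but since essential length is monotone over prefixes and either pair lies in the minimization domain, this is immaterial for the lower bound). You additionally sketch the upper bound (realizability of the DP value) and flag the correctness of the $\delta_v(w)$ charging as the delicate point, both of which the paper leaves implicit, so if anything your write-up is more complete than the published proof.
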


\begin{proof}
Let $\gamma$ be a shortest color sequence conquering the entire graph
starting at $v_0$.
Let $v$ be the first minimal node conquered by $\gamma$
and $w$ the first maximal node.
After conquering $v$ and $w$, $\gamma$ only needs to conquer all free nodes $Max(c)$
to conquer the entire graph.
Thus, $|\gamma| \ge D(v,w) + k$.
\qed
\end{proof}

\begin{theorem}
\label{th_cocomp_general}
{\HBS} can be solved in polynomial time on co-comparability graphs.
\qed
\end{theorem}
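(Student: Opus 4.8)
The plan is to follow the template of the proof of Theorem~\ref{th_cocomp}, but to replace its single shortest-path computation by a bottom-up evaluation of the two-parameter table $D(v,w)$ defined above, and then to extract the answer via Lemma~\ref{thm_opt_general}.

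First I would dispose of the preprocessing. By \cite{GoRoUr83} the partial order $<$ underlying $G$ is computable in polynomial time, and since the class of co-comparability graphs is closed under edge contractions we may first contract the monochromatic connected components so that every color class becomes a chain. After this, the quantities $Min(c)$, $Max(c)$, $min_w(c)$, and $max_v(c)$ are all well defined and can be obtained in polynomial time by comparing a node's neighborhood against the (totally ordered) color classes. The indicator $\delta_v(w)$ only inspects the comparabilities between the pair $(v,w)$ and the second extremal node of a color class, so it is likewise evaluated in polynomial time.

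Next I would bound the cost of the dynamic program itself. There are $O(|V|)$ admissible lower nodes $v$ and $O(|V|)$ admissible upper nodes $w$, hence only $O(|V|^2)$ entries $D(v,w)$. Each right-hand side takes a minimum over the at most $k\le|V|$ colors, and every term in that minimum is a single table lookup plus one evaluation of $\delta$; thus each entry is filled in polynomial time, and the whole table in polynomial time, provided the entries are processed in a valid order.

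The crux --- and the step I expect to be the main obstacle --- is to argue that the recurrence is both correct and well-founded, and this is exactly where the co-comparability structure enters. On the correctness side I would prove the two-sided analogue of Observation~\ref{obs_smaller}: when the current territory has smallest conquered node $v$ and largest conquered node $w$, the entire conquered set (and in particular which first extremal node of each color class has already been taken) is determined by the pair $(v,w)$, so no further history is needed; and a single additional move either raises the top extreme from $min_w(c)$ up to $w$, or lowers the bottom extreme from $max_v(c)$ down to $v$, so that the two families appearing in the recurrence range over precisely the possible immediate predecessors of state $(v,w)$. Well-foundedness then follows because every move strictly enlarges the conquered set, which by the structural fact is a function of $(v,w)$; ordering the states by the size of their conquered set therefore yields a topological order in which each $D(v,w)$ depends only on already-computed entries, ruling out cyclic dependencies. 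Once the table is filled, Lemma~\ref{thm_opt_general} gives the optimal number of moves as $\min_{v,w}(D(v,w)+k)$ taken over all minimal $v$ and maximal $w$, and an optimal color sequence is reconstructed by the usual backtracking along the recurrence.
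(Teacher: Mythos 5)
Your proposal follows essentially the same route as the paper, which in fact states this theorem with no explicit proof beyond the preceding discussion: it relies on the polynomial-size table $D(v,w)$, the polynomial-time evaluation of each entry via the stated recurrence, and Lemma~\ref{thm_opt_general} to extract the optimum. Your additional remarks on preprocessing, well-foundedness of the recurrence, and the claim that the relevant conquered information is determined by the pair $(v,w)$ are exactly the details the paper leaves implicit, so the approaches coincide.
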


%%%%%%%%%%%%%%%%%%%%%%%%%%%%%%%%%%%%%%%%%%%%%%%%%%%%%%%%%%%%%%%%%%%%%%%%%
\subsection{The Solitaire Game on Split Graphs}
\label{ss_split}
%%%%%%%%%%%%%%%%%%%%%%%%%%%%%%%%%%%%%%%%%%%%%%%%%%%%%%%%%%%%%%%%%%%%%%%%%
A \emph{split graph} is a graph whose node set can be partitioned into an induced 
clique and into an induced independent set.
We will show that {\HBS} is NP-hard on split graphs.
Our reduction is from the NP-hard {\tt Feedback Vertex Set} (\FVS)  problem in 
directed graphs; see for instance Garey and Johnson \cite{GaJo79}.

\probl{\FVS}
{A directed graph $(X,A)$; a bound $t<|X|$.}
{Does there exist a subset $X^\prime\subseteq X$ with $|X^\prime|=t$ such that the
directed graph induced by $X-X^\prime$ is acyclic?}

\begin{theorem}
\label{thm_split}
{\HBS} on split graphs is NP-hard.
\end{theorem}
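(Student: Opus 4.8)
The plan is to reduce from directed \FVS. Given a directed graph $(X,A)$ and a bound $t$, I would build a split graph as follows; here I write $n=|X|$ and may assume $(X,A)$ is loopless, since any vertex carrying a self-loop must lie in \emph{every} feedback vertex set and can be deleted after decrementing the budget. The clique $K$ consists of the start node $v_0$ together with one node $v_x$ for each $x\in X$; the independent set $I$ consists of one node $e_{xy}$ for each arc $(x,y)\in A$, and $e_{xy}$ is made adjacent \emph{only} to $v_x$. This is a split graph by construction. I give $v_0$ a fresh color $c_0$, each $v_x$ its own color $c_x$, and I color $e_{xy}$ with the color $c_y$ of its head. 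The bound is set to $b=n+t$.

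The first step is to pin down the conquering dynamics. Since $v_0$ is adjacent to every $v_x$, the first call of $c_x$ always conquers $v_x$; and since the coloring is proper (looplessness means no two equally colored nodes are adjacent), a call of $c_y$ conquers exactly $v_y$ together with those $e_{xy}$ whose clique neighbour $v_x$ already lies in the territory. Reading a color sequence $\gamma$ as a word over the colors, I would then observe that $\gamma$ conquers $V$ iff every $c_x$ occurs at least once and, for every arc $(x,y)\in A$, some occurrence of $c_y$ is \emph{strictly preceded} by an occurrence of $c_x$ (so that $v_x$ is captured when that $c_y$ move reaches $e_{xy}$). Because the solitaire game places no restriction on repeating colors, any feasible word can be padded to length exactly $b$, so the real content is the \emph{minimum} word length. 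I would then prove the clean identity that this minimum equals $n+\tau$, where $\tau$ is the size of a minimum feedback vertex set of $(X,A)$.

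This identity is the core of the argument and the step I expect to be the main obstacle. For the easy inequality, given a feedback vertex set $F$ with $|F|=\tau$, I topologically sort the acyclic graph induced by $X\setminus F$ and emit the word $F\,\sigma\,F$ (the colors of $F$, then $X\setminus F$ in topological order, then the colors of $F$ again); a short case check over the four arc types shows every constraint is met, giving length $n+\tau$. For the hard inequality I take any feasible word of length $L$ and let $R$ be the set of colors occurring at least twice. Each color of $X\setminus R$ occurs exactly once, so the feasibility constraints restricted to arcs inside $X\setminus R$ say precisely that these single occurrences form a linear order respecting all such arcs; hence $X\setminus R$ is acyclic and $R$ is a feedback vertex set. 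Writing $m_x$ for the multiplicity of $c_x$, we get $L-n=\sum_x(m_x-1)\ge|R|\ge\tau$, so $L\ge n+\tau$ and the identity follows.

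Finally I would tie up the routine ends. The construction has $1+n+|A|$ nodes and $n+1$ colors, so it is polynomial. Minimum word length $\le n+t$ is equivalent to $\tau\le t$, which for the budget $t<n$ is in turn equivalent to the existence of a feedback vertex set of size exactly $t$ (a feedback vertex set remains one when vertices are added, so it can be padded up to size $t$). An optimal word is turned into the required length-$b$ sequence by padding with repeats of an already-called color, which conquer nothing new. Together these show that deciding \HBS with bound $b=n+t$ is equivalent to the \FVS\ instance, establishing NP-hardness of \HBS\ on split graphs.
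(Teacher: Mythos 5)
Your proposal is correct and follows essentially the same reduction as the paper: the same split graph (clique on $v_0$ and the vertices of $X$, one degree-one independent-set node per arc colored by its head), the same bound $b=|X|+t$, the same $\tau\pi\tau$ witness for one direction, and the same ``colors called twice form a feedback vertex set'' argument for the other. You merely spell out a few details the paper leaves implicit (self-loops, the padding to length exactly $b$, and the explicit counting $L-n\ge|R|$), but the approach is identical.
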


\begin{proof}
Consider an instance $(X,A,t)$ of \FVS.
To construct an instance $(V,E,b)$ of \HBS,
we first build a clique from the nodes in $X$ together with a new
node $v_0$, the start node of \HBS,
where each node $x\in X+v_0$ has a different color $c_x$.
Next, we build the independent set.
For every arc $(x,y)\in A$, we introduce a corresponding node $v(x,y)$ of color 
$c_y$ which is only connected to node $x$ in the clique, i.e., it has degree one.
Finally, we set $b=|X|+t$.
We claim that the constructed instance of {\HBS} has answer YES,
if and only if the instance of {\FVS} has answer YES.

Assume that the {\FVS} instance has answer YES.
Let $X^\prime$ be a smallest feedback set whose removal makes $(X,A)$ acyclic.
Let $\pi$ be a topological order of the nodes in $X-X^\prime$,
and let $\tau$ be an arbitrary ordering of the nodes in $X^\prime$.
Consider the color sequence $\gamma$ of length $|X|+t$ that starts with $\tau$, 
followed by $\pi$, and followed by $\tau$ again.
We claim that $\{v_0\}\to_{\gamma}V$.
Indeed, $\gamma$ first runs through $\tau$ and $\pi$ and thereby conquers
all clique nodes.
Every independent set node $v(x,y)$ with $y\in X^\prime$ is conquered during the second
transversal of $\tau$.
Every independent set node $v(x,y)$ with $y\in X-X^\prime$ is conquered during the
transversal of $\pi$, since $\pi$ first conquers $x$ with color $c_x$, and afterwards
$v(x,y)$ with color $y$.

Next assume that the instance of {\HBS} has answer YES.
Let $\gamma$ be a color sequence of length $b=|X|+t$ conquering $V$.
Define $X^\prime$ as the set of nodes $x$ such that
color $c_x$ occurs at least twice in $\gamma$;
clearly, $|X^\prime|\le t$.
Consider an arc $(x,y)\in A$ with $x,y\in X-X^\prime$.
Since $\gamma$ contains color $c_y$ only once, it must conquer
node $v(x,y)$ of color $c_y$ after node $v(x)$ of color $c_x$.
Hence, $\gamma$ induces a topological order of $X-X^\prime$.
\qed
\end{proof}

The construction in the proof above uses linearly many colors.
What about the case of few colors?
On split graphs, {\HBS} can always be solved by traversing the color
set $C$ twice; the first traversal conquers all clique nodes, and the second traversal
conquers all remaining free independent set nodes.
Thus, every split graph can be completely conquered in at most $2|C|$ steps.
If there are only few colors, we can simply check all color sequences of this 
length $2|C|$.

\begin{theorem}
\label{thm_split_const}
If the number of colors is bounded by a fixed constant, {\HBS} on 
split graphs is polynomial-time solvable.
\qed
\end{theorem}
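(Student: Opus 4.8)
The plan is to turn the problem into a brute-force search over a constant-size family of candidate color sequences, using the fact—asserted in the discussion preceding the theorem—that every split graph can be conquered in at most $2k$ moves, where $k=|C|$ is now a fixed constant. The first task is to make that upper bound rigorous, since the whole argument rests on it. I would fix the split partition $V=K\cup I$ into a clique $K$ and an independent set $I$ (obtainable in polynomial time by standard split-graph recognition) and argue by traversing $C$ twice. If the start node $v_0$ lies in $K$, then every clique node is adjacent to $v_0$, so calling each color once—one traversal of $C$—captures all of $K$. If instead $v_0\in I$, then since $G$ is connected with at least two nodes, $v_0$ has a neighbour $u\in K$; I would order the traversal so that $col(u)$ is called first. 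This captures $u$ together with all clique nodes of the same color, because those nodes are mutually adjacent and hence lie in one connected component of the subgraph induced by that color, a component that is adjacent to $v_0$ via $u$. Every subsequent color then captures its clique nodes, since they are adjacent to the already-conquered clique. Either way, a single traversal of length $k$ captures all of $K$.

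The second traversal then finishes the job: any still-free node $w\in I$ has all its neighbours in $K$ (as $I$ is independent), hence already conquered, so calling $col(w)$ captures $w$. One further traversal of $C$ therefore captures all of $I$, giving a conquering sequence of length at most $2k$, so $OPT\le 2k$. Note that, since the coloring need not be proper, a single call may capture several clique nodes at once, which only helps the bound.

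With this bound the algorithm is immediate. Because a short conquering sequence can always be padded with redundant moves, the {\HBS} instance with bound $b$ is a YES-instance precisely when there is a conquering sequence of length at most $\min(b,2k)$, that is, precisely when $OPT\le b$. I would enumerate all color sequences of length at most $2k$; there are $\sum_{\ell=0}^{2k}k^\ell=O(k^{2k})$ of them, a constant because $k$ is fixed. For each candidate I would simulate the conquering process, computing $\{v_0\}\to_\gamma W$ step by step—each step being a color-neighborhood computation, i.e.\ a graph search costing time polynomial in $|V|$—and test whether the final set equals $V$. Recording the shortest successful length as $OPT$ and comparing it with $b$ solves the instance in polynomial time. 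The only genuinely delicate point is the bound $OPT\le 2k$, and within it the case $v_0\in I$, where the traversal must be ordered to begin with the color of a clique-neighbour of $v_0$; once the constant bound is established, the exhaustive search over constantly many constant-length sequences is trivially polynomial.

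\qed
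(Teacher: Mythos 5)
Your proof is correct and takes essentially the same route as the paper: establish that a split graph can always be conquered in at most $2|C|$ moves (one traversal of $C$ for the clique, one for the remaining independent-set nodes) and then exhaustively check all color sequences of this constant length. Your treatment of the case $v_0\in I$ and of padding shorter sequences up to length $b$ simply makes explicit details the paper leaves implicit.
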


%%%%%%%%%%%%%%%%%%%%%%%%%%%%%%%%%%%%%%%%%%%%%%%%%%%%%%%%%%%%%%%%%%%%%%%%%
%%%%%%%%%%%%%%%%%%%%%%%%%%%%%%%%%%%%%%%%%%%%%%%%%%%%%%%%%%%%%%%%%%%%%%%%%
\subsection{The Solitaire Game on Trees}
\label{ss_tree}
%%%%%%%%%%%%%%%%%%%%%%%%%%%%%%%%%%%%%%%%%%%%%%%%%%%%%%%%%%%%%%%%%%%%%%%%%
In this section we will show that {\HBS} is NP-hard on trees,
even if there are at only three colors.
We reduce {\HBS} from a variant
of the {\tt Shortest Common Supersequence} (\SCS) problem
which is know to be NP-complete (see Middendorf~\cite{Mid94}).

\probl{\SCS}
{A positive integer $t$;
finite sequences $\sigma_1,\ldots,\sigma_s$ with elements from $\{0,1\}$ with
the following properties: 
(i) All sequences have the same length.
(ii) Every sequence contains exactly two 1s, and these two 1s are separated by
at least one 0.}
{Does there exist a sequence $\sigma$ of length $t$ that contains 
$\sigma_1,\ldots,\sigma_s$ as subsequences?}

Middendorf's hardness result also implies the hardness of the following variant 
of \SCS:

\probl{{\tt Modified SCS} (\MSCS)}
{A positive integer $t$;
finite sequences $\sigma,\ldots,\sigma_s$ with elements from $\{0,1,2\}$ with
the following property:
In every sequence any two consecutive elements are distinct,
and no sequence starts with 2.}
{Does there exist a sequence $\sigma$ of length $t$ that contains
$\sigma_1,\ldots,\sigma_s$ as subsequences?}

\begin{theorem}
\label{thm_mscs}
{\MSCS} is NP-complete.
\end{theorem}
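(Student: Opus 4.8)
The plan is to establish membership in NP and then to reduce the restricted \SCS\ variant to \MSCS. Membership in NP is immediate: a candidate supersequence $\sigma$ of length $t$ serves as a certificate, since checking that each input sequence is a subsequence of $\sigma$ can be done greedily in polynomial time.

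For the hardness part I would use the \emph{interleaving} map $\phi$ that inserts a fresh separator symbol $2$ between every two consecutive letters of a binary word: for an input $\sigma_i=b_1b_2\cdots b_L$ (all \SCS\ inputs share the common length $L$) set $\phi(\sigma_i)=b_1\,2\,b_2\,2\cdots 2\,b_L$. Each $\phi(\sigma_i)$ lies over $\{0,1,2\}$, strictly alternates between a binary letter and a $2$, hence has no two equal consecutive letters, and begins with $b_1\in\{0,1\}$, so it never starts with $2$; thus $\phi(\sigma_i)$ is a legal \MSCS\ input. Given the \SCS\ instance with threshold $t$, I would output the \MSCS\ instance $\{\phi(\sigma_1),\dots,\phi(\sigma_s)\}$ with threshold $t'=2t-1$; the transformation is clearly polynomial.

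The forward direction is routine. If $\sigma$ is a common supersequence of the $\sigma_i$ of length $t$, then $\phi(\sigma)$ has length $2t-1$ and is a common supersequence of the $\phi(\sigma_i)$: fixing an embedding of $\sigma_i$ into $\sigma$, every letter of $\sigma$ is immediately followed by a $2$ in $\phi(\sigma)$ (except the last), so between two consecutively matched binary letters there is always a separator to match the intervening $2$ of $\phi(\sigma_i)$. Conversely, erasing all $2$'s commutes with the subsequence relation, so for any common supersequence $\rho$ of the $\phi(\sigma_i)$ the projection $\rho|_{01}$ (that is, $\rho$ with all $2$'s deleted) is a common supersequence of the original $\sigma_i$. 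The whole reduction therefore hinges on the exact length identity $\mathrm{minMSCS}=2\cdot\mathrm{minSCS}-1$, where $\mathrm{minSCS}$ and $\mathrm{minMSCS}$ denote the lengths of the shortest common supersequences of the two families; the forward construction gives ``$\le$'', and the reduction is correct precisely when ``$\ge$'' also holds.

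Establishing this lower bound is the main obstacle, since a shortest $\rho$ could a priori save length by placing binary letters side by side and using fewer than $L-1$ separators. The plan is to normalize a shortest common supersequence $\rho$ into the alternating shape $\phi(\tau)$: first, minimality rules out two consecutive $2$'s and any leading or trailing $2$, because each input word alternates and so such a separator is never matched twice and may be deleted; the delicate remaining case is two consecutive \emph{binary} letters $b\,b'$, which one must show can be removed without lengthening $\rho$. Here I would exploit the special structure of the \SCS\ inputs---all words have the same length and contain exactly two $1$'s separated by a $0$---to argue that in a shortest $\rho$ no input word can match two side-by-side binary letters of $\rho$ as two consecutive letters of its own word (this would demand an intervening separator), and then to reroute the embeddings so that one of $b,b'$ becomes redundant or the pair can be rebalanced against a separator elsewhere. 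Once $\rho$ has the form $\phi(\tau)$, its projection $\tau=\rho|_{01}$ is a common supersequence of the $\sigma_i$, so $|\tau|\ge\mathrm{minSCS}$ and $|\rho|=2|\tau|-1\ge 2\cdot\mathrm{minSCS}-1$; combined with the easy upper bound this yields the identity and hence the correctness of the reduction.
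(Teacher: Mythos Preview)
Your encoding differs from the paper's: the paper replaces every $0$ by the block $02$ (so only zeros acquire a trailing separator), whereas you insert a $2$ between every pair of consecutive letters and commit to the rigid threshold $t'=2t-1$.

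NP-membership and the forward direction are fine, but the backward direction has a real gap. You propose to normalize a shortest common supersequence $\rho$ of the $\phi(\sigma_i)$ into the alternating shape $\phi(\tau)$, yet the only nontrivial step---eliminating two adjacent binary letters $bb'$ in $\rho$---is left as a vague ``reroute the embeddings'' remark. In fact the length identity you target, $\mathrm{minMSCS}=2\,\mathrm{minSCS}-1$, is \emph{false} for unrestricted binary inputs: with $\sigma_1=00$ and $\sigma_2=11$ one gets $\phi(\sigma_1)=020$, $\phi(\sigma_2)=121$, and $01210$ is a common supersequence of length $5$, while $\mathrm{minSCS}(\{00,11\})=4$; so with $t=3$ the SCS instance is NO but your MSCS instance (threshold $t'=5$) is YES, and the reduction breaks. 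Hence any correct argument along your lines must make essential use of Middendorf's restrictions (equal length, exactly two $1$'s separated by a $0$), and your outline does not show how those restrictions rule out the adjacent-binary-letters case. The paper, by contrast, does not aim for a fixed length ratio: under its map $0\mapsto 02$ the number of inserted separators tracks the number of $0$'s rather than the total length, which is why the two encodings behave differently on the extremal example above.
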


\begin{proof}
Here is a reduction from {\SCS} to {\MSCS}.
Consider an arbitrary sequence $\tau$ with elements from $\{0,1\}$.
We define $f(\tau)$ as the sequence we obtain from replacing
every occurrence of the element 0 in $\tau$ by two consecutive elements 0 and 2.
Now consider an instance $(\sigma_1,\ldots,\sigma_s,t)$ of \SCS.
We construct an instance $(\sigma_1^\prime,\ldots,\sigma_s^\prime,t^\prime)$
of {\MSCS} by setting $\sigma^\prime_i=f(\sigma_i)$, for $1\le i\le s$.
Then, for any sequence $\sigma$ with elements from $\{0,1\}$,
$\sigma$ is a common supersequence of $\sigma_1,\ldots,\sigma_s$ if and only if
$f(\sigma)$ is a common supersequence of $\sigma^\prime_1,\ldots,\sigma^\prime_s$.
This implies the NP-hardness of \MSCS.
\qed
\end{proof}

\begin{theorem}
\label{thm_tree}
{\HBS} is NP-hard on trees, even in case of only three colors.
\end{theorem}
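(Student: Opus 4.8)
The plan is to reduce from \MSCS, which we just proved NP-complete in Theorem~\ref{thm_mscs}. Given an instance $(\sigma_1,\ldots,\sigma_s,t)$ of \MSCS\ over the alphabet $\{0,1,2\}$, I would build a \HBS\ instance on a tree using exactly three colors, one for each symbol. The natural idea is to encode each sequence $\sigma_i$ as a \emph{path} hanging off a common start node $v_0$, where the colors of the path nodes, read from $v_0$ outward, spell out $\sigma_i$. The start node $v_0$ gets a color that cannot be called (or is handled specially), and the $s$ paths are joined only at $v_0$, giving a tree. The key correspondence I want is: a color sequence $\gamma$ conquers a given path if and only if $\gamma$ contains the corresponding symbol sequence $\sigma_i$ as a subsequence. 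If that holds uniformly, then $\gamma$ conquers the whole tree iff $\gamma$ is a common supersequence of all the $\sigma_i$, and the bound $b$ on \HBS\ moves matches the bound $t$ on supersequence length.

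First I would make the path-to-subsequence correspondence precise. Along a single path $v_0 = u_0, u_1, u_2, \ldots$ with $col(u_j)$ equal to the $j$-th symbol of $\sigma_i$, conquering proceeds strictly outward: a color call captures $u_j$ only if $u_{j-1}$ has already been captured and the called color equals $col(u_j)$. Because the \MSCS\ condition guarantees that consecutive symbols are distinct, each path node is a singleton color-component along the path, so one call advances the frontier by exactly one node (the node $u_j$), and only when the called color is $col(u_j)$. Hence the subsequence of $\gamma$ obtained by recording, in order, the calls that actually advance this path must read off precisely $\sigma_i$; conversely any $\gamma$ containing $\sigma_i$ as a subsequence conquers the path. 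I would state this as the central claim of the reduction and verify both directions.

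The main obstacle I anticipate is the role of the start color and the ``no sequence starts with $2$'' condition in \MSCS. Since all paths share $v_0$, I cannot let $v_0$'s color collide with the first symbol of any path in a way that lets one call conquer several first-nodes for free or create unintended color-components at the root. The clean fix is to color $v_0$ with the symbol $2$: then no $\sigma_i$ begins with $2$ (by the \MSCS\ promise), so $v_0$ is not adjacent to any path node sharing its color, and the frontier genuinely starts one step into each path. I would check that this choice keeps the three-color budget, keeps the graph a tree, and does not let a single call conquer $v_0$ together with neighbors. I also need to confirm that padding the sequences to a common length (used in the original \SCS, but here inherited only through $f$) does not reintroduce equal consecutive symbols; the construction of $f$ in Theorem~\ref{thm_mscs} already guarantees the alternation property, so the $\sigma_i$ I receive are safe.

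Finally I would close the equivalence and set the bound. Setting $b = t$, I would argue: if \MSCS\ has a supersequence $\sigma$ of length $t$, then reading $\sigma$ as a color sequence conquers every path (by the claim), hence the whole tree, in $t$ moves; conversely, any conquering sequence of length $b$ must contain each $\sigma_i$ as a subsequence, so it is a common supersequence of length $b = t$. This yields the desired NP-hardness on trees with only three colors, completing the reduction.
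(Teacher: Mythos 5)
Your proposal is correct and follows essentially the same reduction as the paper: reduce from \MSCS, attach to a root $v_0$ of color $2$ one path per sequence $\sigma_i$ whose node colors spell out $\sigma_i$, and set $b=t$. In fact you spell out the path-to-subsequence correspondence and the role of the alternation and ``no sequence starts with $2$'' promises, which the paper leaves as ``straightforward to see.''
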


\begin{proof}
We reduce {\MSCS} to {\HBS} on trees. 
Consider an instance $(\sigma_1,\ldots,\sigma_s,t)$ of \MSCS.
We use color set $C=\{0,1,2\}$.
We first construct a root $v_0$ of color $2$.
Then we attach a path of length $|\sigma_i|$ to $v_0$
for each sequence $\sigma_i$, where an element $j$ is colored $j$.
See the left half of Fig.~\ref{fig_sp} for an example.
Finally, we set $b=t$.
It its straightforward to see that the constructed instance of {\HBS} has 
answer YES if and only if the instance of {\MSCS} has answer YES.
\qed
\end{proof}

%%%%%%%%%%%%%%%%%%%%%%%%%%%%%%%%%%%%%%%%%%%%%%%%%%%%%%%%%%%%%%%%%%%%%%%%%
%%%%%%%%%%%%%%%%%%%%%%%%%%%%%%%%%%%%%%%%%%%%%%%%%%%%%%%%%%%%%%%%%%%%%%%%%
\section{The Two-Player Game}
\label{s_two}
%%%%%%%%%%%%%%%%%%%%%%%%%%%%%%%%%%%%%%%%%%%%%%%%%%%%%%%%%%%%%%%%%%%%%%%%%
In this section we study the complexity of the two-player game.
While on outerplanar graphs the players can compute their winning strategies 
in polynomial time, this problem is NP-hard for series-parallel graphs with 
four colors, and PSPACE-complete with four colors on arbitrary graphs.
Our positive result for outerplanar graphs works for an arbitrary number of
colors.
Our negative results work for four colors, which is the strongest possible type
of result (recall that instances with three colors are trivial to solve).

%%%%%%%%%%%%%%%%%%%%%%%%%%%%%%%%%%%%%%%%%%%%%%%%%%%%%%%%%%%%%%%%%%%%%%%%%
\subsection{The Two-Player Game on Outer-Planar Graphs}
\label{ss_outerplanar}
%%%%%%%%%%%%%%%%%%%%%%%%%%%%%%%%%%%%%%%%%%%%%%%%%%%%%%%%%%%%%%%%%%%%%%%%%
A graph is \emph{outer-planar} if it contains neither $K_4$ nor $K_{2,3}$ as a minor.
Outer-planar graphs have a planar embedding in which every node lies on the boundary
of the so-called \emph{outer face}.
For example, every tree is an outer-planar graph.

Consider an outer-planar graph $G=(V,E)$ as an instance of {\HBtwo} with starting 
nodes $a_0$ and $b_0$ in $V$, respectively.
The starting nodes divide the nodes on the boundary of the outer face $F$ into an 
upper chain $u_1,\ldots,u_s$ and a lower chain $\ell_1,\ldots,\ell_t$,
where $u_1$ and $\ell_1$ are the two neighbors of $a_0$ on $F$,
while $u_s$ and $\ell_t$ are the two neighbors of $b_0$ on $F$.
We stress that this upper and lower chain are not necessarily disjoint 
(for instance, articulation nodes will occur in both chains).

Now consider an arbitrary situation in the middle of the game.
Let $U$ (respectively $L$) denote the largest index $k$ such that player $A$ has 
conquered node $u_k$ (respectively node $\ell_k$).
See Fig.~\ref{fig_outerplanar} to illustrate these definitions and
the following lemma.

\begin{lemma}
\label{thm_outerplanar_conquer}
Let $X$ denote the set of nodes among $u_1,\ldots,u_U$ and $\ell_1,\ldots,\ell_L$
that currently do neither belong to $A$ nor to $B$.
Then no node in $X$ can have a neighbor among 
$u_{U+1},\ldots,u_s,b_0,\ell_t,\ldots,\ell_{L+1}$.
\end{lemma}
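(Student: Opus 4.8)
The plan is to argue by contradiction, exploiting two structural facts: that player $A$'s territory $W_A$ is always a connected subgraph containing $a_0$, and that an outer-planar graph admits an embedding with all vertices on a circle in which every edge is drawn as a non-crossing chord. Suppose, for contradiction, that some free node $x\in X$ had a neighbor $y$ among $u_{U+1},\ldots,u_s,b_0,\ell_t,\ldots,\ell_{L+1}$. I would then show that the chord $[x,y]$ separates two nodes of $W_A$ onto opposite sides of the disk, while $x$ and $y$ themselves lie outside $W_A$; the connectivity of $W_A$ then forces one of its edges to cross $[x,y]$, contradicting non-crossingness.

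The first step is to establish the connectivity of $W_A$, which is where I would put the real work. The key observation is that color classes are captured \emph{atomically}: whenever a player first calls a color $c$ while touching a connected color-$c$ component $K$, that component is still entirely free (owning any node of $K$ would have required an earlier call of $c$ adjacent to $K$, and such a call would already have swallowed all of $K$), so the calling player conquers all of $K$ in that single move. Consequently every color component is owned in its entirety by at most one player, and a move of $A$ enlarges $W_A$ precisely by a union of whole, previously free color-$c$ components, each adjacent to the current $W_A$. Since each such component is connected and attaches to $W_A$, an easy induction starting from $\{a_0\}$ shows that $W_A$ stays connected throughout the game; the same argument applies to $W_B$.

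For the separation step I would fix the circular order $a_0,u_1,\ldots,u_s,b_0,\ell_t,\ldots,\ell_1$ of the outer boundary and treat the two cases $x=u_i$ and $x=\ell_i$ symmetrically. Take $x=u_i$: since $u_U\in W_A$ while $x$ is free we have $i<U$, and since $y$ lies in the far region its clockwise position strictly follows that of $u_U$. Hence the chord $[u_i,y]$ places $u_U$ on one of its two arcs and $a_0$ on the other, and both $u_U$ and $a_0$ belong to $W_A$. A path inside $W_A$ joining them must pass from one arc to the other; because no edge may cross $[u_i,y]$, some vertex of the path would have to coincide with $u_i$ or $y$. But $x=u_i$ is free and $y\notin W_A$ (it is either $b_0\in W_B$, or an upper node of index $>U$, or a lower node of index $>L$, none of which can lie in $W_A$ by the maximality of $U$ and $L$). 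This is the desired contradiction, and the case $x=\ell_i$ is identical with $\ell_L$ playing the role of $u_U$.

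I expect the main obstacle to be the careful bookkeeping in the separation step, together with the technicality that cut vertices appear in both chains, so that the circular ``order'' is really a boundary walk rather than a simple cycle. I would handle this by phrasing the separation as a Jordan-curve argument for the closed curve formed by the chord $[x,y]$ and a boundary arc; this remains valid as long as $x$ and $y$ are distinct vertices lying outside $W_A$, which is exactly what the hypotheses guarantee.
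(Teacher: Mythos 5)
Your overall strategy --- a chord $[x,y]$ with $x,y\notin W_A$ cannot separate two vertices of a connected $W_A$ in an outerplanar embedding --- is exactly the content of the paper's one-line proof, which phrases the same obstruction as a $K_4$-minor built from $x$, $y$, the two boundary arcs, and a connecting path inside $A$'s territory. So the route is the intended one. The problem is the structural fact on which you place all the weight: \emph{$W_A$ need not be connected}, and your ``atomic capture'' argument for it is false. The parenthetical claim that owning a node of a color component $K$ would require an earlier call of that color adjacent to $K$ overlooks the start nodes, and that is precisely where atomicity breaks. Take the path $v_1v_2v_3v_4v_5$ with $a_0=v_1$ green, $b_0=v_3$, and $v_2,v_3,v_4$ all red, $v_5$ blue. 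When $A$ opens by calling red, $\Gamma(\{a_0\},\mathrm{red})=\{v_2,v_3,v_4\}$, but $A$ receives only $\{v_2,v_4\}$ because $v_3\in W_B$ from the outset; now $W_A=\{v_1,v_2,v_4\}$ is disconnected, and the red component is split between the two players. So territories are not unions of whole color components, and the ``easy induction'' does not go through.

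The invariant that does survive is weaker: every conquered node is joined to its owner's start node by a path all of whose vertices are conquered by \emph{either} player, since the monochromatic path witnessing membership in $\Gamma(W_A,c)$ is entirely absorbed into $W_A\cup W_B$ by that move. With only this invariant your separation step has a hole: the $a_0$--$u_U$ path lies in $W_A\cup W_B$, so it avoids the free node $x$ but may pass through $y$ when $y\in W_B$ (for instance $y=b_0$), and then the Jordan-curve contradiction evaporates. To close the gap you must either handle that residual configuration with a further appeal to outerplanarity, or prove connectivity of $W_A$ by an argument that correctly treats a start node sitting inside a larger monochromatic component. In fairness, the paper's own one-sentence proof silently assumes the same connectivity and does not address this either; your write-up has the virtue of exposing exactly where the real work lies, but the step you identified as the crux is the one that fails as written.
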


\begin{proof}
The existence of such a node in $X$ would lead to a $K_4$-minor in the outer-planar graph.
\qed
\end{proof}

\begin{figure}
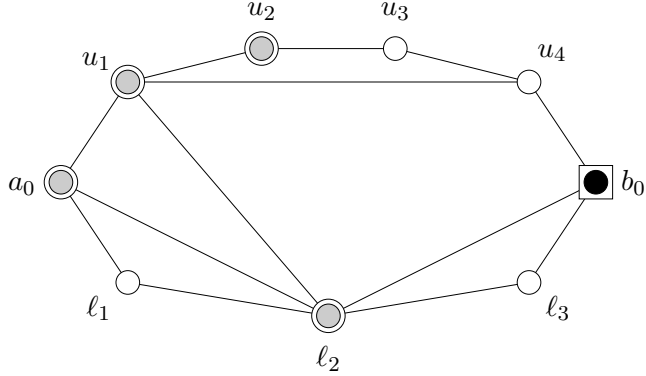

\centering
\input outerplanar.tex
\caption{An outerplanar graph with start nodes $a_0$ and $b_0$.
Player $A$ (circled nodes) has conquered the light-gray colored nodes, i.e., $U=2$ and $L=2$.
Eventually, $A$ will also conquer $\ell_1$, since Player $B$ cannot reach it.}
\label{fig_outerplanar}
\end{figure}

\begin{theorem}
\label{thm_outerplanar}
{\HBtwo} on outer-planar graphs is polynomial-time solvable.
\end{theorem}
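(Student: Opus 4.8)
The plan is to show that, on an outer-planar graph, the entire game is controlled by a constant number of integer parameters, so that it can be solved by backward induction over a polynomial-size state space. First I would exploit outer-planarity: every node lies on the boundary of the outer face, so $V=\{a_0,b_0\}\cup\{u_1,\dots,u_s\}\cup\{\ell_1,\dots,\ell_t\}$ and the nodes are arranged \emph{linearly} between the two start nodes, with only chords as additional edges. Player $A$ advances from the $a_0$-end along both chains, while player $B$ advances from the $b_0$-end; on each chain the two frontiers move toward each other. Writing $U,L$ for $A$'s frontier indices (as in Lemma~\ref{thm_outerplanar_conquer}) and $U',L'$ for the analogous indices measured from $b_0$, Lemma~\ref{thm_outerplanar_conquer} tells me that any still-unconquered node lying \emph{behind} a player's frontier has no neighbour on the opponent's side. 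Hence such a node can never be reached by the opponent, and its eventual owner is already determined by the frontiers.

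This is the crucial structural fact, and I would turn it into an algorithm as follows. Because a player never loses territory and the total number of nodes is odd, the winner of the real game (the first to reach a strict majority) is exactly the player who holds the majority once the graph has been \emph{completely} conquered; so I may analyse the run-to-completion version, in which the timing of individual captures is irrelevant and only the final partition matters. By the previous paragraph that final partition is a function of where the four frontiers ultimately come to rest, i.e.\ of the players' color choices alone. I would therefore define a game state as the tuple $(U,L,U',L',p,q)$, where $p$ and $q$ record the colors most recently called by $A$ and by $B$ (needed to enforce rules R1 and R2). Since $s$, $t$, and the number of colors are all at most $|V|$, there are only $O(|V|^6)$ states. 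From each state the legal moves are the colors permitted by R1--R3, and the effect of a move (the new frontier indices) is obtained by simulating the spread of one called color, which costs polynomial time. As every play terminates (this is exactly what rule R3 guarantees), the resulting finite game has no drawn or infinite runs, so a standard backward-induction / retrograde-labelling argument assigns to every state the identity of its winner; the answer for the instance is the label of the initial state $(0,0,0,0,\bot,\bot)$. All of this runs in polynomial time.

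The main obstacle is to justify rigorously that $(U,L,U',L',p,q)$ is a genuine \emph{sufficient statistic} for the game: two histories reaching the same tuple but leaving different sets of unconquered holes behind the frontiers must have identical futures and the same winner. Here I would argue that advancing a frontier depends only on the frontier position and the called color, since the conquered region is joined to the next chain node through the boundary edge $[u_U,u_{U+1}]$ irrespective of any holes, while Lemma~\ref{thm_outerplanar_conquer} guarantees that every hole is safe for its owner and will be swept up by the end of the completed game; the run-to-completion reformulation then makes the precise moment of hole-filling irrelevant to the outcome. Two further bookkeeping points need care but present no real difficulty: articulation nodes occur in both chains and must be counted once, and I must check that ``blocked'' moves forced by rule R3 (a player who cannot enlarge his territory under R1 and R2) only update the last-called color and never create a cycle that would derail the termination guarantee. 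Once the sufficient-statistic claim is nailed down, the polynomial bound on the state space finishes the proof.
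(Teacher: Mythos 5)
Your proposal follows essentially the same route as the paper: use Lemma~\ref{thm_outerplanar_conquer} to argue that the four frontier indices determine all future play, observe that the state space is therefore polynomial, and solve the game by backward induction. You are in fact slightly more careful than the paper in two respects --- adding the last-called colors to the state to enforce rules R1 and R2 (the paper counts only $O(|V|^4)$ states), and spelling out why the run-to-completion value equals the first-to-majority value --- but these are refinements of the same argument, not a different one.
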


\begin{proof}
The two indices $U$ and $L$ encode all necessary information on the future 
behavior of player $A$.
Eventually, he will own all nodes $u_1,\ldots,u_U$ and $\ell_1,\ldots,\ell_L$,
and the possible future expansions of his area beyond $u_U$ and $\ell_L$
only depend on $U$ and $L$.
Symmetric observations hold true for player $B$.

As every game situation can be concisely described by just four indices,
there is only a polynomial number $O(|V|^4)$ of relevant game situations.
The rest is routine work in combinatorial game theory:
We first determine the winner for every end-situation, and then by working backwards 
in time we can determine the winners for the remaining game situations.
\qed
\end{proof}

%%%%%%%%%%%%%%%%%%%%%%%%%%%%%%%%%%%%%%%%%%%%%%%%%%%%%%%%%%%%%%%%%%%%%%%%%
\subsection{The Two-Player Game on Series-Parallel Graphs}
\label{ss_sp}
%%%%%%%%%%%%%%%%%%%%%%%%%%%%%%%%%%%%%%%%%%%%%%%%%%%%%%%%%%%%%%%%%%%%%%%%%
A graph is \emph{series-parallel} if it does not contain $K_4$ as a minor.
Equivalently, a series-parallel graph can be constructed from a single 
edge by repeatedly doubling edges, or removing edges, or replacing edges 
by a path of two edges with a new node in the middle of the path.
We stress that we do not know whether the two-player game on series-parallel 
graphs is contained in the class NP (and we actually see no reason why it 
should lie in NP); therefore the following theorem only states NP-hardness.

\begin{theorem}
\label{thm_sp}
For four (or more) colors, problem {\HBtwo} on series-parallel graphs is NP-hard.
\end{theorem}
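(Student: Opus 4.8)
The plan is to reduce from an NP-complete problem --- the most natural candidate being some variant of a satisfiability or sequencing problem that can be encoded in a two-player setting. Since the theorem asserts NP-hardness (not PSPACE-hardness) on series-parallel graphs, the reduction should produce instances where the game, despite being nominally two-player, effectively collapses into a one-sided optimization that an NP-witness can certify. The key design idea I would pursue is to build a series-parallel gadget in which one player (say $B$) is forced by rules R1--R3 into an essentially deterministic response, so that the only real freedom lies with player $A$; then a winning strategy for $A$ corresponds to a solution of the underlying combinatorial instance. Because instances with three colors are trivial (the players have no color freedom at all), I would use exactly four colors, spending the fourth color to open up a genuine branching choice for $A$ at each decision point while keeping $B$ pinned.

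First I would fix the source problem. A good match is a sequencing problem like {\MSCS} or a variant of satisfiability whose solution is a short sequence or an assignment, because the {\HB} game is fundamentally about choosing a sequence of colors. I would design a series-parallel ``spine'' consisting of parallel branches, one per clause or per input sequence, joined in series. The majority condition (more than half of an odd number of nodes) would be calibrated so that $A$ wins if and only if $A$ can conquer a designated target region before $B$; the target region is reachable only along color sequences that correspond to valid solutions of the source instance. Rules R1 and R2 (no immediate repeat, no repeat of one's own last move) would be exploited to force $B$ along a predetermined path, while the fourth color serves as a ``reset'' or ``spacer'' that lets $A$ legally revisit a color and thereby encode arbitrary sequences. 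The whole construction must stay series-parallel, i.e.\ forbid any $K_4$ minor; since series-parallel graphs are built by series and parallel composition of edges, I would assemble the gadgets purely by such compositions of paths, which automatically guarantees the absence of a $K_4$ minor.

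The two directions of the equivalence would then be argued as follows. For the forward direction, given a YES-instance of the source problem, I would translate its solution into an explicit color sequence for $A$ and verify that, against every legal response of $B$ (which R1--R3 constrain to a unique or near-unique reply), $A$ conquers strictly more than half the nodes. For the reverse direction, I would show that if $A$ has a winning strategy then the sequence of colors $A$ plays must respect the ordering/assignment constraints encoded in the gadgets --- otherwise $B$ captures enough nodes to deny $A$ the majority --- and hence yields a valid solution of the source instance. Throughout, the node-count bookkeeping (placing enough ``filler'' weight so that the majority threshold falls exactly at the intended target) is routine but must be done carefully.

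The hard part will be controlling player $B$. Unlike in the solitaire game, here an adversary moves, so I must ensure that $B$'s options are genuinely forced by R1--R3 and cannot be used to sabotage an otherwise valid play of $A$, nor to rescue an invalid one. In particular, I expect the delicate point to be designing the gadget so that $B$'s forced responses neither accidentally hand $A$ free territory (collapsing the reduction) nor let $B$ race ahead into $A$'s target region. Getting the four-color bookkeeping to make $B$'s every move deterministic --- while still leaving $A$ the branching freedom needed to encode the combinatorial choice --- is where the real work lies; the remainder is a majority-count calculation on a series-parallel skeleton.
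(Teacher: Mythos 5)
Your overall plan coincides with the paper's: the paper reduces from {\SCS} (the binary supersequence problem, essentially the same source you name via {\MSCS}), builds one parallel path per input sequence hanging off $A$'s start node, uses colors $\{0,1\}$ for the ``sequence'' moves and $\{2,3\}$ as spacers so that $A$ can legally encode an arbitrary binary supersequence in his odd moves, and calibrates node weights (``honey pots'') so that $A$ wins exactly when he reaches all targets within $2t$ rounds. So the skeleton is right, and the source problem, the four-color spacer trick, and the parallel-path structure all match.

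The gap is precisely the point you flag as ``where the real work lies'' and then leave open: how player $B$ is actually pinned down. Asserting that rules R1--R3 ``force'' $B$ is not enough --- on a bare collection of $A$-paths, $B$ has nothing of his own to conquer, and rule R3 only forces a player to expand when he is able to; otherwise $B$ is free to call colors purely to obstruct $A$, or to race into $A$'s territory. The paper's solution is to give $B$ a competing race of his own: each target honey pot $H_i$ (a long color-3 path worth $4st$ nodes) is also reachable from $b_0$ via a path $Q_i$ of length $2t-1$ in which every color-0 node has a color-1 twin with the same two neighbors, so that whichever of $\{0,1\}$ player $A$ calls, $B$ strictly gains by calling the other and cannot afford to waste a move on blocking; a private honey pot $H_B$ of carefully chosen size ($4s(s-1)t+(2n-1)s$ nodes) makes the counts come out so that $B$ wins if he reaches even one $H_i$ first and loses otherwise. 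Without some such incentive structure making every deviation by $B$ strictly harmful to $B$ himself, the reverse direction of your equivalence (a winning strategy for $A$ yields a valid supersequence) does not go through, since you cannot exclude that $A$ wins only against obstructive but legal play by $B$. The forward direction likewise rests on the explicit count ($A$ secures at least $1+(2n-1)s+4s^2t$ nodes against $B$'s at most $1+(3t-1)s+4s(s-1)t+(2n-1)s$), which your proposal defers as routine but which is exactly where the honey-pot sizes are determined.
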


\begin{proof}
We use the color set $C=\{0,1,2,3\}$.
A central feature of our construction is that player $B$ will have no real decision 
power, but will only follow the moves of player $A$:
If player $A$ starts a round by calling color $0$ or $1$, then player $B$ must follow
by calling the other color in $\{0,1\}$ (or waste his move).
And if player $A$ starts a round by calling color $2$ or $3$, then player $B$ must 
call the other color in $\{2,3\}$ (or waste his move).
In the even rounds the players will call the colors in $\{0,1\}$ and in the odd 
rounds they will call the colors in $\{2,3\}$.
Both players are competing for a set of honey pots in the middle of the battlefield,
and need to get there as quickly as possible.
If a player deviates from the even-odd pattern indicated above, he might perhaps 
waste his move and delay the game by one round (in which neither player comes closer 
to the honey pots), but this remains without further impact on the outcome of the game.

The proof is by reduction from the supersequence problem {\SCS} with binary sequences;
see Section~\ref{ss_tree}.
Consider an instance $(\sigma_1,\ldots,\sigma_s,t)$ of {\SCS}, and let $n$ denote the 
common length of all sequences $\sigma_i$.
We first construct two start nodes $a_0$ and $b_0$ of colors $2$ and $3$, respectively.
For each sequence $\sigma_i$ with $1\le i\le s$ we do the following:

\begin{itemize}
\item
We construct a path $P_i$ that consists of $2n-1$ nodes and that is attached to $a_0$:
The $n$ nodes with odd numbers mimic sequence $\sigma_i$,
while the $n-1$ nodes with even numbers along the path all receive color $2$.
The first node of $P_i$ is adjacent to $a_0$, and its last node is connected to 
a so-called honey pot $H_i$. 

\item
The \emph{honey pot} $H_i$ is a long path consisting of $4st$ nodes of color $3$.
Intuitively, we may think of a honey pot as a single node of large weight, because 
conquering one of the nodes will simultaneously conquer the entire path.

\item
Every honey pot $H_i$ can also be reached from $b_0$ by another path $Q_i$ that 
consists of $2t-1$ nodes.
Nodes with odd numbers get color $0$, and nodes with even numbers get color $3$.
The first node of $Q_i$ is adjacent to $b_0$, and its last node is connected 
to $H_i$.
Furthermore, we create for each odd-numbered node (of color $0$) a new twin node 
of color $1$ that has the same two neighbors as the color $0$ node.
Note that for every path $Q_i$ there are $t$ twin pairs.
\end{itemize}

Finally we create a private honey pot $H_B$ for player $B$, that is connected 
to node $b_0$ and that consists of $4s(s-1)t+(2n-1)s$ nodes of color $2$.
This completes the construction; see Fig.~\ref{fig_sp} for an example.

Assume that the {\SCS} instance has answer YES. 
During his first $2t-1$ steps, player $B$ can only conquer the paths $Q_i$ and his
private honey pot $H_B$.
At the same time, player $A$ can conquer all paths $P_i$ by calling color $2$ in his 
even moves and by following a shortest 0-1 supersequence in his odd moves.
Then, in round $2t$ player $A$ will simultaneously conquer all the honey pots $H_i$
with $1\le i\le s$.
This gives $A$ a territory of at least $1+(2n-1)s+4s^2t$ nodes,
and $B$ a smaller territory of at most $1+(3t-1)s+4s(s-1)t+(2n-1)s$ nodes.
Hence $A$ can enforce a win.

Next assume that player $A$ has a winning strategy.
Player $B$ can always conquer his starting node $b_0$ and his private honey pot $H_B$.
If $B$ also manages to conquer one of the pots $H_i$, then he gets a territory of at
least $1+4s(s-1)t+(2n-1)s+4st$ nodes and surely wins the game.
Hence player $A$ can only win if he conquers all $s$ honey pots $H_i$.
To reach them before player $B$ does, player $A$ must conquer them within his 
first $2t$ moves.
In every odd round, player $A$ will call a color $0$ or $1$ and player $B$ will call
the other color in $\{0,1\}$.
Hence, in the even rounds, colors $0$ and $1$ are forbidden for player $A$, and the only
reasonable move is to call color $2$.
Note that the slightest deviation of these forced moves would give player $B$ a deadly 
advantage.
In order to win, the odd moves of player $A$ must induce a supersequence of length
at most $t$ for all sequences $\sigma_i$.
Therefore, the {\SCS} instance has answer YES.
\qed
\end{proof}

\begin{figure}
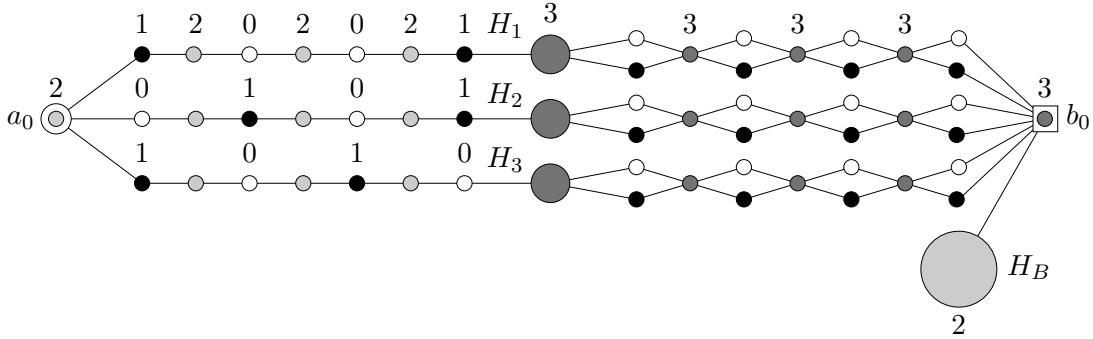

\centering
\input sp.tex
\caption{The graph constructed in the proof of Thm.~\ref{thm_sp} for the 
sequences $\sigma_1=1001$, $\sigma_2=0101$, $\sigma_3=1010$, and $t=4$.
The optimal {\SCS} solution is $10101$.
Thus, $B$ can win this game.}
\label{fig_sp}
\end{figure}

%%%%%%%%%%%%%%%%%%%%%%%%%%%%%%%%%%%%%%%%%%%%%%%%%%%%%%%%%%%%%%%%%%%%%%%%%
\subsection{The Two-Player Game on Arbitrary Graphs}
\label{ss_pspace}
%%%%%%%%%%%%%%%%%%%%%%%%%%%%%%%%%%%%%%%%%%%%%%%%%%%%%%%%%%%%%%%%%%%%%%%%%
In this section we will show that problem {\HBtwo} is PSPACE-complete
on arbitrary graphs.
Our reduction is from the PSPACE-complete {\tt Quantified Boolean Formula} (\QBF) problem;
see for instance Garey \& Johnson \cite{GaJo79}.

\probl{\QBF}
{A quantified Boolean formula with $2n$ variables
in conjunctive normal form:
$\exists x_1\forall x_2\cdots\exists x_{2n-1}\forall x_{2n} \wedge_j C_j$,
where the $C_j$ are clauses of the form $\vee_k l_{jk}$,
where the $l_{jk}$ are literals.}
{Is the formula true?}

\begin{theorem}
\label{thm_pspace}
For four (or more) colors, problem {\HBtwo}  on arbitrary graphs is PSPACE-complete.
\end{theorem}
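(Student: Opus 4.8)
The plan is to establish both membership in PSPACE and PSPACE-hardness.

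For membership, I would first argue that every play of \HBtwo{} lasts only polynomially many rounds. As already noted in Section~\ref{s_def}, rule~R3 guarantees that every game terminates with a win for one of the players; moreover each \emph{productive} move conquers at least one previously free node, and a conquered node is never released again, so there are at most $|V|$ productive moves in total. A move that fails to enlarge a territory is forced and changes the ``last called color'' of its player, so only a bounded number of non-productive moves can intervene before either a productive move becomes available or the game is in a terminal state. Hence the game tree has depth $O(|V|)$, and a game configuration is fully described by the two territories $W_A,W_B\subseteq V$ together with the last colors called by the two players, which is polynomial-size information. Standard alternating minimax search over this tree --- a depth-first recursion whose stack holds one configuration per level --- decides the winner in polynomial space, so \HBtwo{} lies in PSPACE.

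For hardness I would reduce from \QBF. The formula $\exists x_1\forall x_2\cdots\exists x_{2n-1}\forall x_{2n}\,\bigwedge_j C_j$ is turned into a graph on which player $A$ plays the role of the existential player and player $B$ the role of the universal player. The heart of the construction is a chain of $2n$ variable gadgets, processed in the fixed order $x_1,x_2,\ldots,x_{2n}$, where the gadget for an odd-indexed variable forces $A$ and the gadget for an even-indexed variable forces $B$ to make exactly one binary choice. As in the series-parallel proof of Theorem~\ref{thm_sp}, I would use the colors $\{0,1\}$ to carry the value of a variable and the colors $\{2,3\}$ as synchronizing ``filler'' colors, exploiting rules~R1 and~R2 so that at each step the only reasonable options for the player on turn are ``call $0$'' (set the current variable to true) or ``call $1$'' (set it to false); every other call either violates a rule or wastes a move and cedes a decisive advantage. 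The gadget that sets $x_i$ must open access exactly to the literal wires of its satisfied literals, while keeping the rest of the structure inert until it is the next variable's turn, which is how the strict quantifier alternation gets encoded in the turn order of the game.

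After all $2n$ variables are set, I would attach a clause-checking stage built from honey pots in the spirit of Theorem~\ref{thm_sp}. Each clause $C_j$ owns a large honey pot that $A$ can conquer only along a wire activated by one of its satisfied literals, so $A$ collects the reward of $C_j$ iff $C_j$ is satisfied by the chosen assignment. Player $B$ receives a private honey pot, and the weights are tuned (by padding honey pots with $\Theta(s\cdot t)$ nodes of an appropriate color, as in the series-parallel construction) so that $A$ holds a strict majority at the end iff $A$ has collected every clause reward, i.e.\ iff all clauses are satisfied. Consequently $A$ can enforce a win exactly when, for every choice of the universal variables by $B$, there is a choice of the existential variables by $A$ making all clauses true --- that is, exactly when the \QBF{} instance is true. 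Together with membership this yields PSPACE-completeness, and four colors suffice because the construction only ever uses $\{0,1,2,3\}$.

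The main obstacle is the gadget design that simultaneously (i) enforces that the variables are set in the prescribed order $x_1,\ldots,x_{2n}$, (ii) hands each player precisely the intended binary choice at precisely the intended round, and (iii) makes any deviation --- calling a color out of turn, revisiting a variable, or racing ahead toward a honey pot --- immediately punishable, so that a rational player never deviates. With only four colors and the interplay of rules~R1--R3 this forcing is delicate: I must verify that the enlargement rule~R3 leaves open at each point exactly the two intended moves (plus the occasional harmless wasted move), and that the race to the clause honey pots is timed so that $A$ reaches a satisfied clause's pot before $B$ can, while $B$ can steal an unsatisfied clause's pot and win. Making these parities and path lengths line up across all $2n$ alternations, rather than across the single layer of choices in the series-parallel reduction, is where the real work lies.
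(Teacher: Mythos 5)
Your overall plan --- reduce from \QBF{} with player $A$ as the existential player and player $B$ as the universal player, a chain of variable gadgets processed in quantifier order, clause honey pots reachable only through satisfied literals, and weight tuning so that $A$ wins iff he collects every clause pot --- is exactly the architecture of the paper's proof. Your PSPACE membership argument is a reasonable addition (the paper omits it as routine). However, the proposal has a genuine gap: the entire content of the hardness proof is the gadget construction, and you explicitly defer it (``this is where the real work lies''). What you have written is a plan for a proof, not a proof, and the parts you leave open are precisely the parts that are hard to get right.

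Two concrete missing ideas stand out. First, you do not explain how player $A$ ends up owning the clause pots of literals whose value was chosen by the \emph{universal} player $B$. The paper handles this with a specific mechanism: each player walks a private pseudo-path containing \emph{choice pairs} (a path node plus a parallel duplicate); when one player freely picks one node of his pair, rules R1/R2 force the opponent to take the complementary color in his own pair in the same or next step, and for universal variables the colors in $B$'s pair are deliberately \emph{switched} so that $A$'s forced complementary move routes $A$ onto the waiting path of exactly the literal $B$ set true. Without some such device, $A$ has no way to reach the pots of clauses satisfied only by universal literals, and the reduction fails. Second, you never address the shortcut problem: once $A$ conquers a clause honey pot, that pot is adjacent to wires coming from several different variable gadgets, and $A$ could re-enter the variable structure ``from the wrong end'' and skip ahead. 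The paper inserts \emph{waiting gadgets} --- full copies of the remaining suffix of $A$'s pseudo-path hanging off each literal wire --- precisely so that reaching a clause pot early buys $A$ nothing. A construction without an analogue of this is broken. Until these two mechanisms (and the accompanying parity/length bookkeeping and honey-pot sizes, e.g.\ $\Theta(n^2)$ for clause pots and $\Theta(n^3)$ for the terminal pots) are actually specified and verified, the reduction is not established.
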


\begin{proof}
We reduce from {\QBF}.
Let $F=\exists x_1\forall x_2\cdots\exists x_{2n-1}\forall x_{2n}
\bigwedge_j C_j$ be an instance of \QBF.
We construct a bee graph $G_F=(V,E)$ with four colors (white, light-gray, dark-gray, and 
black) such that player $A$ has a winning strategy if and only if $F$ is true.  
Let $a_0$ (colored light-gray) and $b_0$ (colored dark-gray) denote the start nodes of 
players $A$ and $B$, respectively.

Each player controls a \emph{pseudo-path}, that is, a path where some nodes may be 
duplicated as parallel nodes in a diamond-shaped structure; see Fig.~\ref{fig_var}.
A so-called \emph{choice pair} consists of a node on a pseudo-path together with 
some duplicated node in parallel.
The start nodes are at one end of the respective pseudo-paths, and the players can 
conquer the nodes on their own path without interference from the other player.
However, they must do so in a timely manner because either path ends at a humongous
\emph{honey pot}, denoted respectively by $H_A$ and $H_B$.
A honey pot is a large clique of identically-colored nodes (we may think of it as 
a single node of large weight, because conquering one node will simultaneously 
conquer the entire clique).
Both honey pots have the same size but different colors, namely black ($H_A$)
and white ($H_B$), and they are connected to each other by an edge.
Consequently, both players must rush along their pseudo-paths as quickly as possible 
to reach their honey pot before the opponent can reach it and to prevent the opponent
from winning by conquering both honey pots.
The last nodes before the honey pots are denoted by $a_f$ and $b_f$, respectively.
They separate the last variable gadgets (described below) from the honey pots.

\begin{figure}[ht]
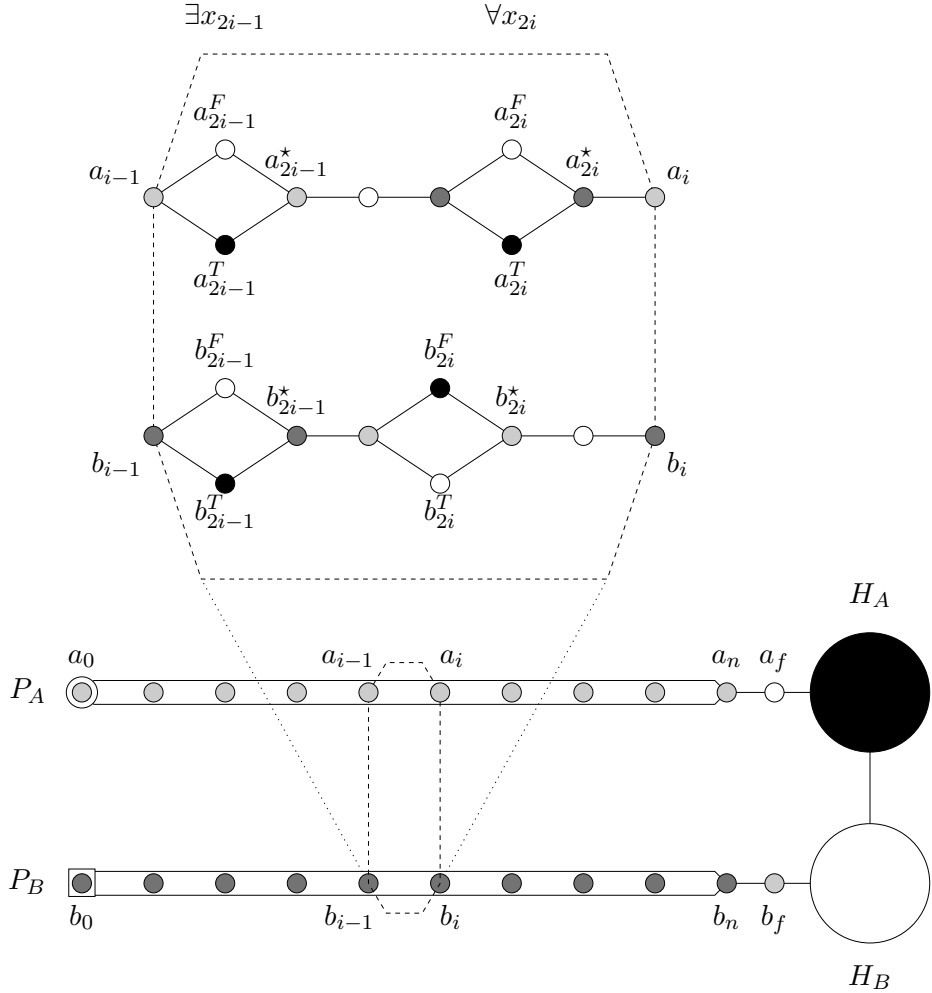

\centering
\input var.tex
\caption{The variable gadget in the proof of Thm.~\ref{thm_pspace}.}
\label{fig_var}
\end{figure}

Fig.~\ref{fig_var} shows an overview of the pseudo-paths
and one \emph{variable gadget} in detail.
A variable gadget is a part of the two pseudo-paths corresponding to
a pair of variables $\exists x_{2i-1} \forall x_{2i}$, for some $i\ge1$.
For player $A$, the gadget starts at node $a_{i-1}$ with a choice pair
$a_{2i-1}^F$ and $a_{2i-1}^T$, colored white and black, respectively.
The first node conquered by $A$ will determine
the truth value for variable $x_{2i-1}$.
In the same round, player $B$ has a choice on his pseudo-path $P_B$ between nodes
$b_{2i-1}^F$ and $b_{2i-1}^T$.
Since these nodes have the same color as $A$'s choices in the same round,
$B$ actually does not have a choice but must select the other color
not chosen by $A$.

Three rounds later, player $B$ has a choice pair
$b_{2i}^F$ and $b_{2i}^T$, assigning a truth value to variable $x_{2i}$.
In the next step (which is in the next round), player $A$ has a choice pair
$a_{2i}^F$ and $a_{2i}^T$ with the same colors as $B$'s choice pair for $x_{2i}$.
Again, this means that $A$ does not really have a choice but must select the color not chosen
by $B$ in the previous step.
Since we want $A$ to conquer those clauses containing a literal set to true by player $B$,
the colors in $B$'s choice pair have been switched, i.e., $b_{2i}^F$ is black and
$b_{2i}^T$ is white.

Note that all the nodes $a_0,a_1,\ldots,a_n$ are light-gray
and all the nodes $b_0,b_1,\ldots,b_n$ are dark-gray.
This allows us to concatenate as many variable gadgets as needed.
Further note that $a_f$ is white, while $b_f$ is light-gray.

The \emph{clause} gadgets are very simple.
Each clause $C_j$ corresponds to a small honey pot $H_j$ of color white.
The size of the small honey pots is smaller than the size of the large honey pots
$H_A$ and $H_B$, but large enough such that player $A$ loses if he misses one of
them.
Player $A$ should conquer $H_j$ if and only if $C_j$ is true in the assignment
chosen by the players while conquering their respective pseudo-paths.
We could connect $a_{2i-1}^T$ directly with $H_j$ if $C_j$ contains literal $x_{2i-1}$,
however then player $A$ could in subsequent rounds shortcut his pseudo-path
by entering variable gadgets for the other variables in $C_j$ from $H_j$.
To prevent this from happening, we place waiting gadgets
between the variable gadgets and the clauses.

Let $a_{k}^\star$ denote the node on $P_A$ right after
the choice pair $a_k^F$ and $a_k^T$, for $k=1,\ldots,2n$;
similarly, $b_k^\star$ are the nodes on $P_B$ right after $B$'s choice pairs.
A \emph{waiting gadget} $W_k$ consists of two copies $W_k^F$ and
$W_k^T$ of the sub-path of $P_A$ starting at $a_k^\star$ and ending at $a_n$,
see Fig.~\ref{fig_wait}.
If clause $C_j$ contains literal $x_k$, $H_j$ is connected to the node $w_n^T$
corresponding to $a_n$ in $W_k^T$;
if $C_j$ contains literal $\overline{x_k}$, $H_j$ is connected to the node $w_n^F$
corresponding to $a_n$ in $W_k^F$.
If $k=2i-1$ (i.e., we have an existential variable $x_{2i-1}$ whose value is assigned by
player $A$), then $a_{2i-1}^F$ and $b_{2i-1}^F$ are connected to $w_{2i-1}^{\star F}$,
and $a_{2i-1}^T$ and $b_{2i-1}^T$ are connected to $w_{2i-1}^{\star T}$.
If $k=2i$ (i.e., we have a universal variable $x_{2i}$ whose value is assigned by
player $B$), then $a_{2i}^F$ and $b_{2i}^\star$ are connected to $w_{2i}^{\star F}$,
and $a_{2i}^T$ and $b_{2i-1}^\star$ are connected to $w_{2i}^{\star T}$.

\begin{figure}
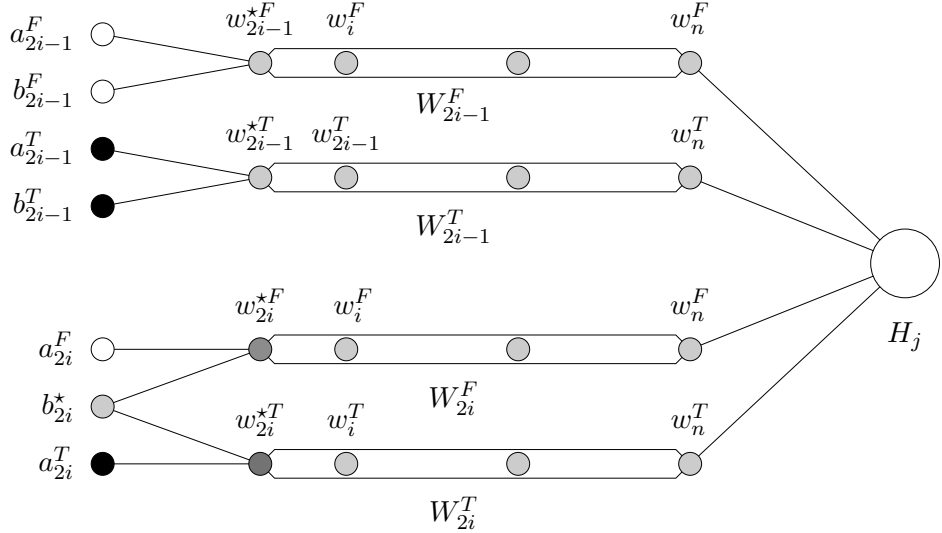

\centering
\input wait.tex
\caption{The waiting gadgets for existential variables
($W_{2i-1}^F$ and $W_{2i-1}^T$, the two top paths)
and universal variables ($W_{2i}^F$ and $W_{2i}^T$, the two bottom paths)
in the proof of Thm.~\ref{thm_pspace}.
Note that usually one of the two waiting paths $W_k^F$ or $W_k^T$ woud be connected to $H_j$
because we may assume that a clause does not contain $x_k$ and $\overline{x_k}$
at the same time.
}
\label{fig_wait}
\end{figure}

Finally, we connect $b_f$ with all clause honey pots $H_j$
to give player $B$ the opportunity to conquer all those
clauses that contain no true literal.
This completes the construction of $G_F$.
Fig.~\ref{fig_example} shows the complete graph $G_F$ for a small example formula $F$.

We claim that player $A$ has a winning strategy on $G_F$
if and only if formula $F$ is true.
It is easy to verify that player $A$ can indeed win if $F$ is true.
All he has to do is to conquer those nodes in his existential choice pairs
corresponding to the variable values in a satisfying assignment for $F$.
For the existential variables, he has full control to select any value,
and for the universal variables he must pick the opposite color as selected by player 
$B$ in the previous step, which corresponds to setting the variable to exactly the value
that player $B$ has selected.
Hence player $B$ can block a move of player $A$ by appropriately selecting a
value for a universal variable.
Note that no other blocking moves of player $B$ are advantageous:
If $B$ blocks $A$'s next move by choosing a color that does not make progress 
on his own pseudo-path, then $A$ will simply make an arbitrary waiting move and 
then in the next round $B$ cannot block $A$ again.
When player $A$ conquers node $a_n$, he will simultaneously conquer the last nodes
in all waiting gadgets corresponding to true literals.
Since every clause contains a true literal for a satisfying assignment, player $A$ 
can then in the next round conquer $a_f$ together with all clause honey pots
(which all have color white).
Player $B$ will respond by conquering $b_f$, and the game ends with both players 
conquering their own large honey pots $H_A$ and $H_B$, respectively.
Since player $A$ got all clause honey pots, he wins.

To make this argument work, we must carefully chose the sizes of the honey pots.
Each pseudo-path contains $9n+1$ nodes, of which at most $n$ can be conquered by 
the other player.
The waiting gadgets contain two paths of length $9k+6$ for existential variables
and $9k+1$ for universal variables.
At the end, player $A$ will have conquered one of the two paths completely and maybe
some parts of the sibling path, that is, we do not know exactly the final owner of
less than $n^2$ nodes.
The clause honey pots should be large enough to absorb this fuzzyness, which means 
it is sufficient to give them $2n^2$ nodes.
The honey pots $H_A$ and $H_B$ should be large enough to punish any foul play by the 
players, that is, when they do not strictly follow their pseudo-paths.
It is sufficient to give them $2n^3$ nodes.

To see that $F$ is true if player $A$ has a winning strategy note that player $A$ 
must strictly follow his pseudo-path, as otherwise player $B$ could beat him by 
reaching the large honey pots first.
Thus player $A$'s strategy induces a truth assignment for the existential variables.
Similarly, player $B$'s strategy induces a truth assignment for the universal variables.
Player $A$ can only win if he also conquers all clause honey pots, and hence the players 
must haven chosen truth values that make at least one literal per clause true.
This means that formula $F$ is satisfiable.
\qed
\end{proof}

%%%%%%%%%%%%%%%%%%%%%%%%%%%%%%%%%%%%%%%%%%%%%%%%%%%%%%%%%%%%%%%%%%%%%%%%%
%%%%%%%%%%%%%%%%%%%%%%%%%%%%%%%%%%%%%%%%%%%%%%%%%%%%%%%%%%%%%%%%%%%%%%%%%
\section{Conclusions}
\label{s_conclusion}
%%%%%%%%%%%%%%%%%%%%%%%%%%%%%%%%%%%%%%%%%%%%%%%%%%%%%%%%%%%%%%%%%%%%%%%%%
We have modeled the Honey Bee game as a combinatorial game on colored graphs.
For the solitaire version, we have analyzed the complexity on many classes of
perfect graphs.
For the two player version, we have shown that even the highly restricted 
case of series-parallel graphs is hard to tackle.
Our results draw a clear separating line between easy and hard variants of
these problems.

%%%%%%%%%%%%%%%%%%%%%%%%%%%%%%%%%%%%%%%%%%%%%%%%%%%%%%%%%%%%%%%%%%%%%%%%%
%%%%%%%%%%%%%%%%%%%%%%%%%%%%%%%%%%%%%%%%%%%%%%%%%%%%%%%%%%%%%%%%%%%%%%%%%
\section{Acknowledgements} 
Part of this research was done while G.~Woeginger visited Fudan University in 2009.

%%%%%%%%%%%%%%%%%%%%%%%%%%%%%%%%%%%%%%%%%%%%%%%%%%%%%%%%%%%%%%%%%%%%%%%%%
%%%%%%%%%%%%%%%%%%%%%%%%%%%%%%%%%%%%%%%%%%%%%%%%%%%%%%%%%%%%%%%%%%%%%%%%%

\bibliographystyle{plain}

\begin{figure}
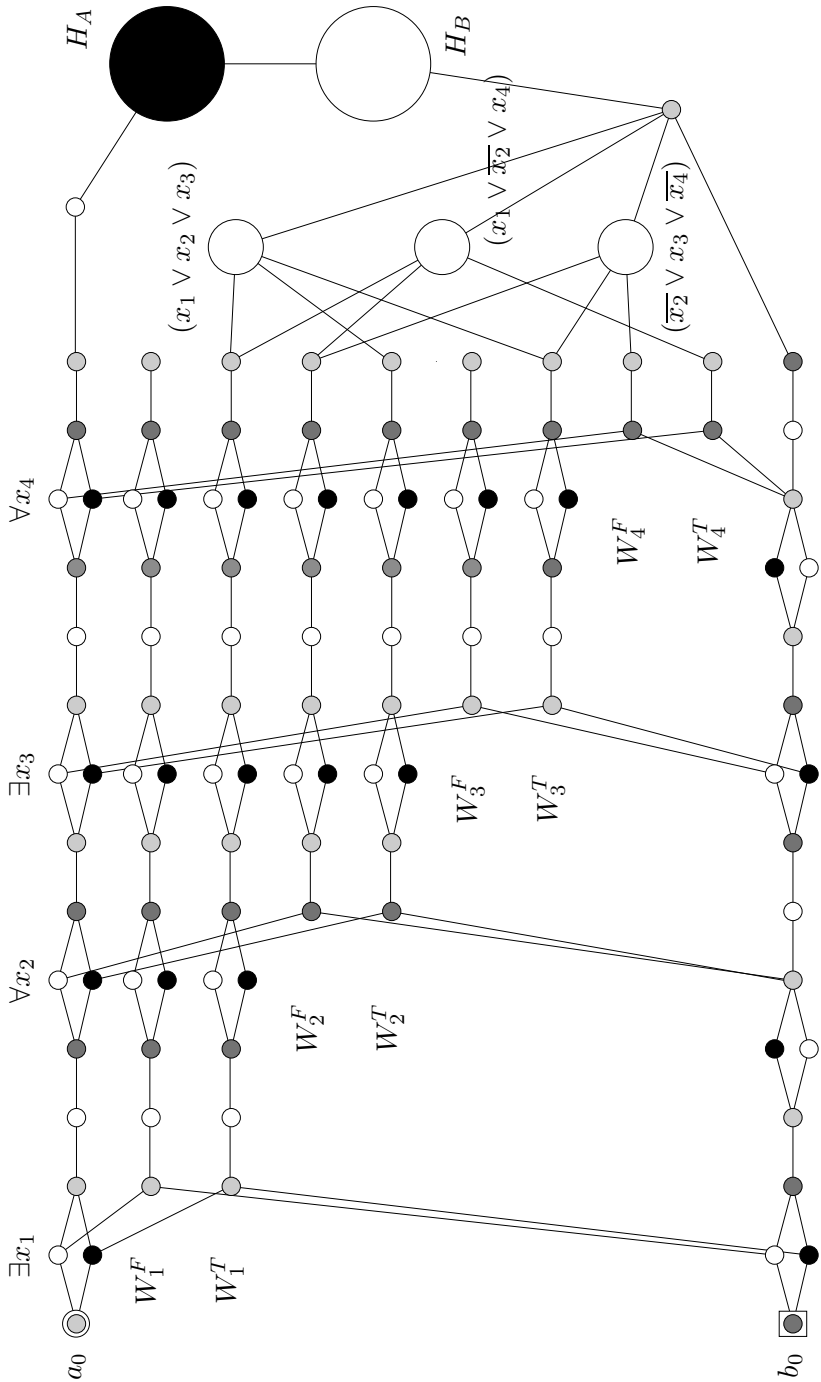

\begin{center}
\rotate{\input graph.tex }
\caption{The reduction in the proof of Thm.~\ref{thm_pspace} would produce this graph
for the formula $F=(x_1\vee x_2\vee x_3) \wedge (x_1\vee \overline{x_2}\vee x_4)
\wedge (\overline{x_2}\vee x_3\vee\overline{x_4})$.}
\label{fig_example}
\end{center}
\end{figure}

\end{document}